\documentclass{article}
\usepackage{array}
\newcolumntype{C}{>{\centering\arraybackslash}p{5cm}}
\usepackage{times}
\usepackage{graphicx}
\usepackage{color}
\usepackage[margin=1in]{geometry}
\usepackage{adjustbox}
\usepackage{amsmath}
\usepackage{amssymb}
\usepackage{mathtools}
\usepackage{amsthm}
\newcommand{\withnotes}{1}
\newcommand{\withcolors}{1}
\usepackage{ccanonne}
\allowdisplaybreaks

\def\<{\langle}
\def\>{\rangle}
\def\be{\begin{equation*}}
\def\ee{\end{equation*}}
\def\bea{\begin{eqnarray*}}
\def\eea{\end{eqnarray*}}

\def\C{\mathbb{C}}

\def\eps{\varepsilon}

\newcommand{\tr}{\operatorname{Tr}}

\newcommand{\dotprodB}[2]{\dotprod{#1}{#2}_B}
\newcommand{\dotprodH}[2]{\dotprod{#1}{#2}_\mathcal{D}}
\newcommand{\parity}[1]{\chi_S(#1)}

\newcommand{\nsamp}{n}
\newcommand{\ket}[1]{\ensuremath{\left|#1\right\rangle}}
\newcommand{\op}[2]{|#1\rangle \langle #2|}

\newcommand{\krawt}[2]{K^{(N)}_{#1}\left(#2\right)}

\newcommand{\dims}{d}

\newcommand{\unif}{{\mathbf{u}}}

\newcommand{\povmset}{{\mathfrak{M}}}
\newcommand{\nqubits}{{N}}

\newcommand{\opnorm}[1]{{\left\|#1\right\|}_{\text{op}}}
\newcommand{\tracenorm}[1]{{\left\|#1\right\|}_{1}}

\newcommand{\barDelta}{{\overline{\Delta}}}

\newcommand{\x}{\mathbf{x}}
\newcommand{\out}{{x}}

\def\multiset#1#2{\ensuremath{\left(\kern-.3em\left(\genfrac{}{}{0pt}{}{#1}{#2}\right)\kern-.3em\right)}}

\newcommand{\qmm}{{\rho_{\text{mm}}}}

\newcommand{\Var}{\text{Var}}
\newcommand{\eye}{\mathbb{I}}

\newcommand{\Luders}{\mathcal{L}}

\newcommand{\POVM}{\mathcal{M}}

\title{Near-Optimal Mixedness Testing with Pauli Measurements}
\author{
    \begin{tabular}[t]{C@{\extracolsep{6.5em}} C}
   Jayadev Acharya &Abhilash Dharmavarapu \\
 Cornell University & Cornell University\\ 
\small \texttt{acharya@cornell.edu} &\small \texttt{ad2255@cornell.edu} 
\end{tabular}
\vspace{2ex}\\
\begin{tabular}[t]{C@{\extracolsep{6.5em}} C}
    Yuhan Liu & Nengkun Yu \\
Rice University & Stony Brook University\\ 
\small \texttt{yuhan-liu@rice.edu} &\small \texttt{nengkun.yu@cs.stonybrook.edu} 
\end{tabular}}

\begin{document}

\maketitle

% \begin{abstract}
% We provide a sublinear algorithm for quantum state certification using Pauli measurements, where one is given
% $n$ copies of an unknown $d$-dimensional quantum state $\sigma$ of rank at most $r$, and
% one wants to test whether $\sigma = \sigma$ for some known
% rank-$r$ state $\sigma$, or whether $\sigma$ is $\varepsilon$-far from $\sigma$. The sample complexity of our algorithm is
% $\tildeO{\frac{r\cdot (3^{p*}/\sqrt{2})^{\nqubits}}{\eps^{2}}}
% \approx \tildeO{\frac{r\cdot 1.88^{\,\nqubits}}{\varepsilon^{2}}}=o(\frac{r d}{\varepsilon^{2}})$,
% where $p* \in(1/2,1)$ satisfies $H_2(p*)=1/2$. Along with our upper bound, we prove a lower bound $\tildeOmega{\frac{\sqrt{10}^{\,\nqubits}}{\varepsilon^{2}}}
% \approx
% \tildeOmega{\frac{3.16^{\,\nqubits}}{\varepsilon^{2}}} \mathscr{\Omega}$ for rank-$d$ states.
% \end{abstract}
\begin{abstract}
     We consider a fundamental problem of \emph{mixedness testing}: Given $n$ 
    copies of an $N$-qubit state $\rho$, determine whether $\rho = \mathbb{I}_d/d$ or 
    $\|\rho-\mathbb{I}_d/d\|_1 \geq \varepsilon$ with high probability, where $d = 2^N$. In particular, we focus on performing this task in the practical setting of single-qubit measurements, where measurements are prepared independently on each qubit. We provide a nearly complete picture of single-qubit mixedness tesing by showing $n = \Tilde{\Theta}\left(\sqrt{10}^N/\varepsilon^2\right)$.
    To establish our lower bound, we introduce a new measurement-dependent lower bound framework for adaptive single-copy state certification. For the upper bound, we present a randomized Pauli basis measurement protocol, which relies on a new primitive for computationally efficient uniformity testing of correlation-concentrated distributions on the Boolean hypercube.
\end{abstract}

\section{Introduction}
The goal of \emph{state certification} is to determine whether a state is $\eps$-far or equal to a known target state $\sigma$. 
More formally, we perform measurements on $n$ copies of a $N$-qubit mixed state $\rho$ to determine whether $\|\rho-\sigma\|_1 \geq \eps$ or $\rho=\sigma$ with high probability. The problem poses important applications in quantum computing, such as evaluating state degradation in quantum devices and circuits and verifying outcomes of quantum experiments. 
An approach to certification is to compute the density matrix $\rho$ directly using quantum state tomography (QST). The issue is that this procedure can be costly, even in the most powerful measurement setting. 
On the other hand, we can perform state certification by instead learning simpler properties of the state, which leads to using exponentially fewer number of copies than QST. 

A scenario of state certification that is of special interest is when the target state is the maximally mixed state, i.e. $\sigma = \eye_d/d = \qmm$. 
We often refer to this task as \emph{mixedness testing}. The copy complexity of mixedness testing has been well studied under the context of entangled measurements, where measurements are performed simultaneously on $n$ copies, and single-copy measurements, where measurements are performed sequentially on each copy. Under the entangled measurement setting, it has been shown that $n = \bigTheta{2^\nqubits/\eps^2}$ for mixedness testing \cite{BOW17, BadescuO019}. Under the simpler single-copy setting, copy-complexity has been resolved to be $n = \bigTheta{\sqrt{8}^\nqubits/\eps^2}$ ~\cite{BubeckC020, HaahHJWY17}. In particular, the results hold even under adaptivity, where the measurement on a single copy can depend on the outcomes of the previously measured copies~\cite{chen2023does}.

In practice, preparing the aforementioned measurements can often be difficult with large qubit systems, as they require entangling $\nqubits$ qubits. For example, the upper bound for single-copy mixedness is attained by performing a Haar measurement and classically performing uniformity testing on the outcome distribution~\cite{BubeckC020}. 
Even using approximate unitary designs to simulate the Haar measure requires repeated use of multi-qubit gates, which may be prone to error given the current quality of quantum hardware. 
It is often much more practical to prepare measurements independently on each qubit, which we refer to as single-qubit measurements. However, relatively little is known about the copy complexity of single-qubit mixedness testing.

We nearly settle the question of the copy-complexity of this problem, showing that $n = \tildeTheta{\sqrt{10}^\nqubits/\eps^2}$ copies are sufficient and necessary for single-qubit certification with copy-wise adaptivity.
We improve the previously best known upper bound of $\tildeO{4^\nqubits/\eps^2}$ \cite{Yu2023almost}, which is attained by more restrictive Pauli observable measurements. 
Prior to our work, there were no known results for the lower bound other than $\bigOmega{\sqrt{8}^N/\eps^2}$ under single-copy measurements.

\subsection{Results and Contributions} We will highlight the main results in our paper. First of all, we present a mixedness tester with Pauli measurements.
\begin{theorem}[Randomized Pauli Mixedness Tester]\label{thm:pauli-ub} Mixedness testing can be performed with $\tilde{\mathcal{O}}\left(\frac{\sqrt{10}^N}{\eps^2} \right)$ copies using randomized Pauli basis measurements. Here $\tilde{\mathcal{O}}$ hides a $\poly(\nqubits)$ factor.
\end{theorem}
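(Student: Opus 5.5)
The plan is to reduce to classical uniformity testing on $\{0,1\}^N$ after a random local Pauli basis choice, and then bound the $\ell_2$ distance of the resulting outcome distribution to uniform. Write $\rho = \qmm + \frac{1}{d}\sum_{P\neq \eye}\alpha_P P$, where $P$ ranges over the $4^N-1$ nontrivial Pauli strings. The condition $\|\rho-\qmm\|_1\ge\eps$ gives $\|\rho-\qmm\|_2^2\ge \eps^2/d$, so $\sum_P \alpha_P^2 \ge \eps^2$.

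For each copy we draw a uniformly random basis $b\in\{X,Y,Z\}^N$, measure every qubit in its basis, and record $(b,x)$ with $x\in\{0,1\}^N$. Conditioned on $b$, the outcome distribution is
\[
p_b(x)=2^{-N}\Bigl(1+\sum_{P \text{ compatible with } b}\alpha_P\,\chi_{\mathrm{supp}(P)}(x)\Bigr),
\]
where $P$ is compatible with $b$ if it agrees with $b$ on its support. Since $P$ is compatible with $b$ with probability $3^{-|P|}$, averaging over $b$ gives
\[
\mathbb{E}_b\|p_b-u\|_2^2 = 2^{-N}\sum_P 3^{-|P|}\alpha_P^2 .
\]
This alone only yields a bound in terms of $3^{-N}\eps^2$. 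The point of the construction is that $p_b$ has Fourier mass only on the supports of compatible Paulis, so it is ``correlation-concentrated''. The tester should therefore use weight-adapted statistics, weighting each Fourier coefficient by a function of its level, instead of a plain collision count.

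Concretely, for each weight $k$ we estimate
\[
T_k=\sum_{|P|=k}\alpha_P^2
\]
with an unbiased U-statistic. For a pair of samples $(b_i,x_i)$, $(b_j,x_j)$, and each set $S$ with $|S|=k$ on which the bases $b_i$ and $b_j$ agree, we add
\[
9^{k}\,\chi_S(x_i)\chi_S(x_j),
\]
so that the expectation is exactly $T_k$ up to normalization. We then test whether $\max_k T_k \gtrsim \eps^2/N$, and a union bound over the $N$ levels costs only $\poly(N)$. Computational efficiency comes from writing this sum over $S$ as an elementary symmetric polynomial in per-qubit agreement variables, which can be evaluated in $O(N^2)$ time per pair.

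The main obstacle is the variance bound. For a level-$k$ statistic with $m$ samples, the dominant variance term under the null is
\[
m^{-2}\,\mathbb{E}\Bigl[\Bigl(\sum_S 9^{|S|}\chi_S\chi_S\Bigr)^2\Bigr].
\]
The sets $S$ must lie inside the agreement set of the two bases, which has binomial size with rate $1/3$. So the variance term becomes
\[
m^{-2}\binom{N}{k}\,3^{-k}\cdot 81^{k}
\]
up to the appropriate normalization, and this must be compared against $T_k^2\ge(\eps^2/N)^2$ together with the cross terms of the form $\sum \alpha_P^2\cdot(\text{weight})$.

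I expect that optimizing over $k$ gives a worst case of
\[
\max_k \binom{N}{k}^{1/2}\,3^{k/2}\;\le\;\Bigl(\sum_k\binom{N}{k}3^k\Bigr)^{1/2}=4^{N/2}.
\]
That is too much, so the weighting has to be tuned further, for instance by reweighting with the compatibility probability and exploiting the $2^{-N}$ normalization of $p_b$. The exponent then balances to $(1+9)^{N/2}=\sqrt{10}^N$, because the natural variance generating function is $\sum_k\binom{N}{k}9^k=10^N$. The first thing I would do is pin down the exact weights so that the variance sum becomes $\sum_k\binom Nk 9^k$. That yields the required $m=\tilde O(\sqrt{10}^N/\eps^2)$ by Chebyshev, with median amplification to high success probability.
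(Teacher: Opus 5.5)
\textbf{There is a genuine gap, and it lies in the measurement design rather than in the choice of weights.}

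You draw a fresh uniform basis $b$ for every copy and record $(b,x)$. This produces i.i.d.\ data with law $3^{-N}\mathrm{Tr}(M^b_x\rho)$, which is exactly what you get by applying one fixed POVM $\{3^{-N}M^b_x\}_{b,x}$ to every copy. No statistic of such data can reach $\sqrt{10}^N/\eps^2$.

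To see this, take $\rho_z=\qmm+\frac{\eps}{d\,3^{N/2}}\sum_{Q\in\{X,Y,Z\}^{N}}z_Q Q$ with uniform random signs. Then $\sum_Q\alpha_Q^2=\eps^2$, and since $3^N\ge d^{3/2}$ the usual normalization costs at most $\mathrm{poly}(N)$.
\begin{itemize}
\item For weight-$N$ $Q$, $\mathrm{Tr}(M^b_xQ)$ equals $\chi_{[N]}(x)$ if $Q=b$ and $0$ otherwise. So your data has law $6^{-N}\bigl(1+\eps 3^{-N/2}z_b\chi_{[N]}(x)\bigr)$.
\item The pair $(b,\chi_{[N]}(x))$ is a sufficient statistic. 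The problem is therefore exactly Paninski's instance on $2\cdot 3^N$ symbols with bias $\eps 3^{-N/2}$.
\item That instance needs $\Omega(3^{3N/2}/\eps^2)=\Omega(\sqrt{27}^N/\eps^2)$ samples.
\end{itemize}

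Your own null-variance formula says the same thing. It gives $\binom{N}{k}3^{-k}81^{k}=\binom{N}{k}27^{k}$, not $\binom{N}{k}3^{k}$; this is $27^N$ at $k=N$ and about $28^N$ in the worst case over $k$. Separately, $4^{N/2}=2^N$ is smaller than $\sqrt{10}^N$, so calling it ``too much'' was inconsistent anyway.

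The $27^k$ arises because two independently drawn bases agree on $S$ only with probability $3^{-|S|}$, which forces your $9^{|S|}$ unbiasing factor. The obstruction is information-theoretic, so no reweighting can turn the variance into $\sum_k\binom{N}{k}9^k$, and the last step of your plan cannot be carried out. This is an instance of the fact, recalled in the related work, that deterministic single-copy schemes need $\Theta(4^N/\eps^2)$ copies.

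\textbf{The missing idea is to correlate the randomness across copies.} Sample a basis $P$ and spend a whole block of copies on that same $P$.
\begin{itemize}
\item Within a block, every pair of outcomes is comparable on every $S$. The weight-$w$ collision statistic is then the paper's Krawtchouk statistic $W_w$, i.e.\ your sum over $|S|=w$ with every agreement indicator equal to one and no $9^{w}$ factor. Its null variance is only of order $\binom{N}{w}/n^2$.
\item The factor $3^w$ is paid once, through the number of bases. With the Pauli influence $L_P=\sum_{Q\triangleleft P}3^{w(Q)}\alpha_Q^2$, one has $\mathbb{E}_P L_P=\sum_Q\alpha_Q^2=L$.
\item Levin's work investment then gives a level $i$ with $\Pr[L_P\ge 3^{i-2}L]\ge 3^{-i}/N$.
\item For such $P$, some weight $w$ has $\|\alpha_P^{=w}\|_2^2\gtrsim 3^{i-w}\eps^2/N$. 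This weight also carries a $3^{i-w}/(9N)$ fraction of $\|\alpha_P\|_2^2$, which is what controls the variance under the alternative.
\end{itemize}
Using $3^iN$ bases with about $N\sqrt{\binom{N}{w}}\,3^{w-i}/\eps^2$ copies each gives $\mathrm{poly}(N)\sqrt{9^w\binom{N}{w}}/\eps^2\le \mathrm{poly}(N)\sqrt{10}^N/\eps^2$. This is where $\sum_w 9^w\binom{N}{w}=10^N$ genuinely enters.
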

\noindent We further prove that the above upper-bound is near-optimal among single-qubit measurements,
\begin{theorem}[Adaptive Single-Qubit Mixedness Testing Lower Bound]\label{thm:sq-lb} Single-qubit mixedness testing with copy-wise adaptivity requires $\tildeOmega{\frac{\sqrt{10}^\nqubits}{\eps^2}}$ copies.
\end{theorem}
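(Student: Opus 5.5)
We will prove the lower bound through the standard Paninski-style construction of a random perturbation of $\qmm$, combined with a chi-square / mutual-information argument for adaptive single-copy protocols in the style of \cite{chen2023does}. The new ingredient is a measurement-dependent bound that exploits the tensor-product structure of single-qubit POVMs.

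\textbf{Hard ensemble.} We take $\rho_U = \qmm + \frac{\eps}{d}\, U \Delta U^\dagger$ as the alternative, where $U$ is drawn from a product of $N$ independent Haar-random single-qubit unitaries (or a Pauli-twirled analogue) and $\Delta$ is a fixed traceless matrix with $\|\Delta\|_1 \approx d$. We write $\Delta = \sum_{P} c_P P$ in the Pauli basis. We choose $\Delta$ so that its weight is spread uniformly over Pauli strings, e.g.\ the $c_P$ are random signs normalized so that $\sum_P c_P^2 \approx d$; this ensures trace distance $\Omega(\eps)$ with high probability.

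\textbf{Reduction to a measurement-dependent quantity.} Following the adaptive framework, $\mathrm{KL}$ or $\chi^2$ between the transcript distributions under the null and the mixture is bounded by $n\,\eps^2$ times
\[
\max_{\POVM}\; \Big\| \mathbb{E}_U\big[\phi_U \otimes \phi_U\big] \Big\|,
\]
where for a POVM $\{M_x\}$ the function $\phi_U(x) = \tr(M_x U\Delta U^\dagger)/\tr(M_x)$ is the likelihood-ratio perturbation. Concretely, we need a uniform bound
\[
\sup_{\POVM \in \povmset}\;\sup_{\text{prior}}\; \mathbb{E}_{x}\Big[\mathbb{E}_{U}\big[\phi_U(x)\phi_U(x')\big]^2\Big] \lesssim \frac{\mathrm{poly}(N)}{\sqrt{10}^{\,N}}
\]
over single-qubit POVMs $\povmset$; this is the analogue of the $1/\sqrt{8}^N$ bound for general POVMs. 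Up to convexity, each $M_x$ is a product of rank-one projectors $\bigotimes_i |\psi_i\rangle\langle\psi_i|$. The quantity $\tr(M_x P)$ then factorizes across qubits into products of Bloch-vector components. Averaging over the local Haar unitaries, each non-identity Pauli site contributes a factor of $1/3$ per Bloch-direction correlation, and this per-qubit structure should yield a product of per-qubit quantities.

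\textbf{Per-qubit computation.} Per qubit, the second-moment operator $\mathbb{E}_U[(U\Delta U^\dagger)^{\otimes 2}]$ restricted to one site has identity weight $1$ and non-identity weight $3\cdot(1/3)$. We then need the operator norm of the resulting $d^2\times d^2$ matrix, whose entries are indexed by outcome pairs. Expanding the squared quantity gives a sum over pairs of Pauli strings, with per-qubit contributions $1 + 9\cdot(1/9)$ times appropriate overlap factors. The goal is to show it tensorizes into $(\cdot)^N$ with per-qubit rate $1/\sqrt{10}$ relative to $d$, i.e.\ an effective bound of $(\sqrt{10}/4)^{-N}$ after normalization. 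The number $10 = 1 + 9$ should arise as the count of local Pauli pairs $(P_i,Q_i)$ contributing per site: identity–identity plus $3\times 3$ non-identity pairs, each bounded in magnitude. The final copy bound is then $n \gtrsim \sqrt{10}^N/\eps^2$, with the $\poly(N)$ loss coming from concentration of $\|\rho_U - \qmm\|_1$ and from handling the adaptivity via a martingale or KL chain rule.

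\textbf{Main obstacle.} The main obstacle is the measurement-dependent bound: showing that the supremum over adaptive, arbitrary product POVMs of the relevant operator norm tensorizes across qubits and achieves the $\sqrt{10}$ rate rather than something weaker. A naive triangle inequality loses the square root. I would first try to express the quantity as $\|\bigotimes_i A_i\|$ for per-qubit operators $A_i$ depending only on the local POVM, and then bound each $\|A_i\|$ by a $2\times 2$ (Bloch-sphere) extremal computation. If non-product POVM mixtures break exact factorization, I would fall back to convexity in $\POVM$ to reduce to extreme product measurements.
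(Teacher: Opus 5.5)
\textbf{There is a genuine gap: the hard ensemble rules out the target rate.}

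Conjugation by a product of local unitaries maps each Pauli string into the span of Pauli strings with the same support. So $U\Delta U^\dagger$ has the same weight profile as $\Delta$. With coefficients spread uniformly, a fraction $\approx 3^w\binom{N}{w}/4^N$ of the squared mass sits at weight $w$, concentrated near $w\approx 3N/4$. Such an ensemble is only $\sqrt{8}^N$-hard for Pauli measurements, as the following tester shows.
Measure every copy in the fixed basis $Z^{\otimes N}$, and write $\rho_U-\qmm=\sum_Q \frac{\alpha_Q}{d}Q$ with $\sum_Q\alpha_Q^2\approx\eps^2$. The support-$S$ block then carries mass $\approx 3^{|S|}\eps^2/4^N$. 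A random local rotation puts on average a $3^{-|S|}$ fraction of it on $Z^S$; for a typical random-sign $\Delta$ this concentrates, so it holds for most $U$. Hence the outcome distribution has $\sum_{S\neq\emptyset}\alpha_{Z^S}^2\approx\eps^2/2^N$. The collision/$\ell_2$ uniformity tester on $\{\pm1\}^N$ then succeeds with $O(2^{N/2}\cdot 2^N/\eps^2)=O(\sqrt{8}^N/\eps^2)$ copies.

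Consequently, the bound $\lesssim\mathrm{poly}(N)/\sqrt{10}^N$ you want on your measurement-dependent quantity is false for this ensemble. No treatment of adaptivity or tensorization can repair that. Your own bookkeeping already shows it: $1+9\cdot\frac19=2$ per site tensorizes to $2^N$, and against $\ell^2\approx 16^N$ this is exactly the generic rate $\sqrt{16^N/2^N}=\sqrt{8}^N$.

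A per-site factor $10=1+9$ appears only if non-identity sites are up-weighted. Take a fixed Pauli basis and random-sign coefficients with $\alpha_Q^2\propto s^{w(Q)}$. The Paninski $\chi^2$ computation then forces $n\gtrsim\eps^{-2}\bigl((1+3s)^2/(1+s^2)\bigr)^{N/2}$. This is $\sqrt{8}^N$ at your $s=1$. It is maximized at $s=3$, where it equals $\sqrt{10}^N$ and the mass $\propto 9^w\binom{N}{w}$ concentrates at $w\approx 9N/10$.
Separately, $\|\Delta\|_2^2=d\sum_Pc_P^2$, so a flat $\Delta$ with $\|\Delta\|_1\approx d$ needs $\sum_Pc_P^2\approx 1$, not $\approx d$.

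\textbf{How the paper avoids this.} The paper makes the corresponding sharp choice: independent signs on the normalized Paulis $Q/\sqrt{d}$ with $w(Q)\ge w^*=\lceil 9N/10\rceil$. Thus $\ell=\sum_{w\ge w^*}3^w\binom{N}{w}\ge d^{3/2}$. It reduces the adaptive KL in three steps: the chain rule, then $\chi^2$, then $\mathbb{E}[\mathbb{E}[z_m\mid x^i]^2]\le\ln 4\cdot I(z_m;x^i)$ together with an average mutual-information bound. The result is $n^2\eps^4\sup_{\POVM}\sum_m\langle\Luders_{\POVM}(V_m),V_m\rangle/\ell^2$ plus a negligible term.

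The step you flag as the main obstacle is then elementary. It needs neither a Haar twirl nor an operator norm over outcome pairs. Take a product element $M_x=\bigotimes_i M_{x,i}$. The qubit Bloch inequality $\sum_{P\in\{X,Y,Z\}}\tr(M_{x,i}P)^2\le\tr(M_{x,i})^2$ gives $\sum_{Q:\,\mathrm{supp}(Q)=S}\tr(M_xQ)^2\le\tr(M_x)^2$. Dividing by $d\,\tr(M_x)$ and summing over $x$, every single-qubit measurement contributes at most $1$ per support set $S$. That is, $\sup_{\POVM}\sum_m\langle\Luders_{\POVM}(V_m),V_m\rangle\le\sum_{w\ge w^*}\binom{N}{w}$.
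Since $\ell\ge 3^{w^*}\sum_{w\ge w^*}\binom{N}{w}$, the ratio is at most $\bigl(9^{w^*}\sum_{w\ge w^*}\binom{N}{w}\bigr)^{-1}\lesssim\sqrt{N}/10^N$. This yields $n=\Omega\bigl(\sqrt{10}^N/(N^{1/4}\eps^2)\bigr)$.
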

This is a result of a new lower bound method for mixedness testing with adaptive single-copy measurements.

% \noindent We also established a measurement-dependent lower bound for adaptive single-copy mixedness testing.
% \begin{theorem}[Lower Bound for Adaptive, Restricted Single-Copy Mixedness Testing ]\label{thm:gen-lb}Consider $\ell$ orthonormal operators $V_1,V_2,\ldots,V_\ell \in \mathbb{C}^{d \times d}$ for $\ell \geq d^{3/2}$. Then, an adaptive single-copy mixedness testing protocol with the restricted measurement set $\povmset$ must satisfy
%     \begin{align*}
%         n = \bigOmega{\frac{1}{\eps^2} \cdot \left(\frac{\ell^2}{ \sup_{\POVM \in \povmset} \sum_{m=1}^\ell \dotprodH{\Luders_\POVM(V_m)}{V_m}}\right)^{1/2}}
%     \end{align*}
% when $N \geq K$, where $K$ is an absolute constant dependent on $\povmset$.
% \end{theorem}

\subsection{Related Work}
\label{sec:related}

\paragraph{Quantum state tomography.}
Quantum state tomography aims to reconstruct an unknown quantum state from repeated preparations. For low-rank states, compressed-sensing techniques showed that substantially fewer measurement settings can suffice under suitable measurement ensembles~\cite{GrossLFBE10,Flammia_2012,Liu2011universal,KRT14}. Entangled measurements can further reduce the sample complexity: nearly sample-optimal bounds for rank-$r$ states under trace distance were established in~\cite{HaahHJWY17,ODonnellW16,ODonnellW17}, and recent lower bounds complete the characterization, showing that $\bigTheta{rd/\varepsilon^2}$ copies are necessary and sufficient with arbitrary collective measurements~\cite{scharnhorst2025}. Related work considers alternative loss functions and estimation objectives, including fidelity and quantum $\chi^2$ divergence~\cite{Yuen_2023,flamian2023tomography,wang2024sampleoptimal}.

Entangled measurements may require more complicated procedures to prepare measurements that span across multiple copies, motivating restricted models in which copies are measured separately, possibly adaptively. An $\bigOmega{dr^2}$-type lower bound for single-copy tomography of rank-$r$ states was established in~\cite{HaahHJWY17}, with subsequent work developing lower bounds for adaptive and other restricted single-copy measurements~\cite{10.1145/3717450,chen2023does}. For general states, the tradeoff between measurement entanglement and copy complexity was further characterized in~\cite{Chen0L24memory}, interpolating between single-copy and fully collective tomography.

For $\nqubits$-qubit systems, Pauli measurements, a class of single-qubit measurements, are particularly important because of their experimental simplicity. Their hierarchical structure, where each Pauli measurement simultaneously provides information about its lower-weight marginals, was exploited to obtain an $\bigO{10^\nqubits/\varepsilon^2}$ upper bound and later to establish a separation between Pauli and unrestricted single-copy tomography~\cite{Yu2020Pauli,ADLY2025Paulinot}. More recently, a nearly matching $\bigOmega{10^\nqubits/(\sqrt{\nqubits}\varepsilon^2)}$ lower bound was proved even for arbitrary adaptive single-qubit measurements~\cite{acharya2025single}. In the pure state setting, it was recently shown that, rather surprisingly, Pauli measurements suffice to nearly meet the optimal rate of entangled and single-copy measurements $\tildeO{2^\nqubits/\eps^2}$~\cite{grewal2026paulpure}.

\paragraph{Quantum state certification.}
Quantum state certification, or quantum identity testing, asks whether an unknown state $\rho$ equals a known reference state $\sigma$ or satisfies $\|\rho-\sigma\|_1 \geq \eps$, and is a basic problem in quantum property testing~\cite{MdW13}. With unrestricted entangled measurements, worst-case certification requires $\Theta(2^\nqubits/\varepsilon^2)$ copies~\cite{BadescuO019}, substantially fewer than full tomography. Related work studies spectral-property testing and other quantum testing problems~\cite{OW15,BadescuO21,Yu21sample}. Measurement restrictions again increase the complexity: entangled measurements are necessary for optimal quantum property testing~\cite{BubeckC020}, while randomized single-copy certification has tight worst-case complexity $\Theta(\sqrt{8}^\nqubits/\varepsilon^2)$~\cite{Chen0HL22}. Instance-dependent bounds depending on the spectrum of the reference state were developed in~\cite{ChenLO22instance}. Randomness itself is also a resource: deterministic single-copy schemes require $\Theta(4^\nqubits/\varepsilon^2)$ copies in the worst case~\cite{Liu2024role}. Recent work further studies instance-optimal certification with entangled measurements~\cite{odonnell2025instanceocert}, certification with limited entanglement across copies~\cite{wadhwa2026optimal}, and certification under non-adaptive restricted measurements~\cite{liu2024restricted}.

\paragraph{Estimation of state properties.}
Another related line of work estimates selected properties without reconstructing the density matrix directly. Such approaches include protocols based on sampled Pauli observables and experimentally motivated certification procedures~\cite{flamia2011direct,PhysRevLett.107.210404,AGKE15}. Shadow tomography and classical-shadow methods instead predict many observables from comparatively few measurements~\cite{Aaronson20,huang2020predicting,Elben_2022}. These methods can be substantially more sample-efficient for targeted observables, but do not in general yield a global trace-distance approximation of the unknown state.

\section{Our Techniques }
We will now discuss the higher-level ideas and contributions of our non-adaptive Pauli mixedness tester. The lower bound framework contributions are self-contained in~\cref{sec:gen-bound}.

\subsection{Detecting Influence from Overlapping Measurements}
The core of our Pauli mixedness tester is to determine if the norm of the Pauli coefficients are large enough. By the typical relationship between $\ell_1$ and $\ell_2$ distances, it is clear that if $\|\sigma - \qmm\|_1 \geq \eps$ with Pauli decomposition $\rho = \sum_{Q\in \mathcal{Q}} \frac{\alpha_Q}{d} Q$ for the set of Paul observables $\mathcal{Q}=\{X,Y,Z,\eye\}^{\nqubits}$, then
\begin{align*}
   \sum_{Q \neq \eye_d}  \alpha_Q^2 \geq \eps^2,
\end{align*}
This norm will be $0$ when $\sigma = \qmm$. Thus, it suffices to detect if the norm of Pauli coefficients is $\geq \eps^2$ or $0$. In the two-outcome Pauli observable setting~\cite{Yu2023almost}, we can only obtain information about one Pauli coefficient at a time, leading to a copy complexity of $\tildeO{4^\nqubits/\eps^2}$. Pauli basis measurements are much stronger, enabling us to learn $2^\nqubits$ Pauli coefficients with a single measurement basis. The main question we ask is:
\begin{center}
    Can we detect the norm of these coefficients with $o(4^\nqubits/\eps^2)$ copies?
\end{center}
To answer this question, we must consider the overlapping structure of Pauli measurements. For a particular Pauli basis $P \in \{X,Y,Z\}^{\otimes \nqubits}$, we learn about the coeffcients $Q \triangleleft P$, which denotes that the non-identity tensor entries of $Q$ match $P$.  Consider an example with a $2$-qubit state. If we want to learn $\alpha_{X\eye}$, then we can choose measurement bases $XX,XY,XZ$. On the other hand, learning $\alpha_{XX}$ can only be done through the basis $XX$. Therefore, the coefficients under this structural model of Pauli measurements cannot be treated the same. In order to account for the structure of the Pauli measurements, we propose a key quantity called the \emph{Pauli Influence}.

\begin{definition}[Pauli influence] Given a Pauli basis $P \in \{X,Y,Z\}^{\otimes \nqubits}$ and a state $\sigma$ with Pauli decomposition $\{\alpha_Q\}_{Q \in \mathcal{Q}}$, the \emph{Pauli Influence} of $P$ is defined as
\begin{align*}
    L_P \eqdef \sum_{\emptyset \subset S \subseteq [N]} 3^{|S|} \alpha_{P}(S)^2 = \sum_{Q \triangleleft P} 3^{w(Q)} \alpha_Q^2
\end{align*}
where $\alpha_P(S)$ is the Pauli coefficient of the observable that matches entries of $P$ for $i \in S$ and is $\eye$ for $i \notin S$ and $w(Q)$ is the number of non-identity tensor factors of $Q$.
\end{definition}
 We scale each Pauli coefficient monotonically by the Pauli weight, denoted by the number of non-identity entries of the observable. The idea is that $3^{w(Q)}$ gives us a priority on which Pauli coefficients should be tested with higher precision. Revisiting our prior example, the low weight $\alpha_{\eye X}$ will have priority of $3$ while the high weight $\alpha_{XX}$ will have a priority of $9$ . An important fact of Pauli influence is that it captures the total norm of the coeffcients when $P$ is sampled uniformly. We have the following property on Pauli influence,
% Given this prioritization of the coefficients, we need to ensure that sampling $\tildeO{J}$ measurements\ymargin{Unclear} uniformly results in a measurement of total influence of $\bigOmega{J \|\alpha_Q\|_2^2}$, as we need to control the total amount of ``work'' used to detect the non-zero groups. This approach is widely known as \emph{Levin's Work Investment Strategy}~\cite{levin1985oneway}. It spans various theoretical computer science domains, including property testing~\cite{goldreich2014input}, graph theory~\cite{goldreich2002testing}, quantum state certification~\cite{Yu2023almost}, and sublinear algorithms~\cite{berman2014lp}. We will prove the following Levin's work strategy result for detecting Pauli influence.
\begin{lemma}[Detectability of Pauli Group Influence]\label{lem:pauli_work_invest}
    Consider a mixed state $\rho$ with Pauli decomposition $\{\alpha_Q\}_{Q \in \mathcal{Q}}$. Let $L_P = \sum_{Q \triangleleft P} 3^{w(Q)}\alpha_Q^2$ and $L = \sum_{Q \in \mathcal{Q}} \alpha_Q^2$, then there exists $i \in [N]$ such that
    \begin{align*}
       \Pr[L_P \geq 3^{i-2} L] \geq \frac{3^{-i}}{N},
    \end{align*}
    where $P \sim \unif[\{X,Y,Z\}^{\otimes N}]$.
\end{lemma}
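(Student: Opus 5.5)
The plan is to prove the lemma by a first-moment computation followed by a dyadic-style (here triadic) bucketing argument, in the spirit of Levin's work investment strategy. Throughout, $L$ is read as the sum over non-identity $Q$, consistent with the earlier reduction $\sum_{Q\neq \eye_d}\alpha_Q^2\ge \eps^2$; the identity coefficient is not counted in $L_P$ either, since $S$ ranges over nonempty sets.

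\textbf{Step 1: the mean of $L_P$.} First I will show that $\mathbb{E}_P[L_P]=L$ when $P\sim\unif[\{X,Y,Z\}^{\otimes N}]$. By linearity,
\[
\mathbb{E}[L_P]=\sum_{Q\neq \eye_d}3^{w(Q)}\alpha_Q^2\,\Pr[Q\triangleleft P].
\]
For $Q\triangleleft P$ to hold, each of the $w(Q)$ non-identity positions of $Q$ must match the corresponding entry of $P$. Each position matches independently with probability $1/3$, so $\Pr[Q\triangleleft P]=3^{-w(Q)}$. The weights therefore cancel exactly, giving $\mathbb{E}[L_P]=L$. This is precisely why the factor $3^{w(Q)}$ appears in the definition of Pauli influence.

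\textbf{Step 2: a deterministic upper bound.} Next I will note that $L_P\le 3^N\sum_{Q\triangleleft P}\alpha_Q^2\le 3^N L$ for every $P$. This is what caps the number of relevant scales at about $N$.

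\textbf{Step 3: bucketing.} Set $t_i=\Pr[L_P\ge 3^{i-2}L]$. I will split the expectation according to which interval $[3^{i-2}L,3^{i-1}L)$ contains $L_P$, for $i=1,\dots,N+1$.
\begin{itemize}
\item The region $L_P<L/3$ contributes at most $L/3$.
\item On bucket $i$, $L_P<3^{i-1}L$, and the bucket probability is $t_i-t_{i+1}$.
\item By Step 2, the top region contributes at most $3^N L\,t_{N+1}$.
\end{itemize}
Summing by parts and using $t_{N+1}\le t_N$ to fold the top term into index $N$, I expect an inequality of the form
\[
L=\mathbb{E}[L_P]\le \frac{L}{3}+L\sum_{i=1}^{N}c\,3^{i}\,t_i
\]
for a small absolute constant $c$. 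Dividing by $L$ gives $\sum_{i=1}^N 3^i t_i\ge 2/(3c)$. By averaging, some $i\in[N]$ then satisfies $3^i t_i\ge 2/(3cN)$, which is the claimed bound $t_i\ge 3^{-i}/N$ up to the constant.

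\textbf{Main obstacle.} The only delicate part is the constant bookkeeping in Step 3: the summation by parts, the treatment of the top bucket, and where the cutoff $3^{i-2}$ sits. These must combine so that the leftover constant is at least $1$ and the exact bound $3^{-i}/N$ comes out. If the naive accounting loses a constant factor, I would first try moving the lower cutoff (for example, discarding $L_P<L/9$ instead of $L_P<L/3$). Failing that, I would use the sharper per-bucket bound $L_P<3^{i-1}L$ together with $\sum_{i\le N}3^{i-1}\le 3^N/2$ to absorb the top bucket, since the $3^{i-2}$ offset in the statement seems to have been chosen to leave exactly this slack.
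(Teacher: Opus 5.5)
Your Steps 1 and 2 are correct, and Step 1 is exactly the paper's computation of $\mathbb{E}[L_P]$. Reading $L$ as the sum over non-identity $Q$ is also the right reading, and a necessary one: with the identity term included, the maximally mixed state would be a counterexample.

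The gap is in Step 3. You carry it out only ``up to the constant,'' but the lemma asserts the exact bound $3^{-i}/N$ at threshold $3^{i-2}L$, and the route you describe cannot reach it.
\begin{itemize}
\item With only the cap $L_P\le 3^N L$, the top band $[3^{N-1}L,3^N L]$ survives. Under the negated hypothesis its contribution is controlled only by $3^N L\,t_{N+1}<L/N$.
\item Doing the summation by parts exactly gives $\frac23\le\frac23 t_1+\sum_{i=2}^{N+1}2\cdot 3^{i-2}t_i$. After folding $t_{N+1}\le t_N$, the coefficient of $3^N t_N$ is $\frac89$, so uniform-$c$ averaging yields only $t_i\ge\frac34\cdot 3^{-i}/N$.
\item Comparing termwise instead, the negation bounds the right-hand side by $\frac{2N+6}{9N}$. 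This contradicts $\frac23$ only for $N\ge 2$.
\item At $N=1$, no argument that uses only $\mathbb{E}f=\mu$ and $0\le f\le 3\mu$ can succeed. Take $f=3\mu$ with probability slightly above $\frac14$ and $f$ just below $\mu/3$ otherwise; then $\Pr[f\ge\mu/3]<\frac13$.
\item Your first proposed fix does not help. The band $[L/9,L/3)$ is the $i=0$ scale, which lies outside $[N]$, so the negated hypothesis gives no control there and it can absorb up to $L/3$ of the mean.
\end{itemize}

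The missing idea, which the paper's proof uses, is discreteness: $P$ is uniform over $3^N$ bases, each of mass $3^{-N}$. Under the negation, $t_N<3^{-N}/N\le 3^{-N}$ forces $t_N=0$, i.e.\ $L_P<3^{N-2}L$ for every $P$. This is far stronger than your cap and removes the top bands altogether. Each band $[3^{i-2}L,3^{i-1}L)$ for $i=1,\dots,N-1$ then contributes at most $3^{i-1}L\,t_i<L/(3N)$. Hence $\mathbb{E}[L_P]\le \frac L3+\frac{(N-1)L}{3N}<L$ when $L>0$ (the case $L=0$ is trivial), a contradiction. The paper packages this as a generic Levin lemma for $X$ uniform on $[k]$ with $M=\lceil\log_3 k\rceil$ scales, and applies it with $k=3^N$.

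Alternatively, you can keep your cap-based termwise accounting for $N\ge 2$ and handle $N=1$ directly. There $\mathbb{E}[3\alpha_P^2]=L$ over the three bases forces some $P$ with $L_P\ge L$, so $\Pr[L_P\ge L/3]\ge\frac13$.
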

At a high-level, there exists an $i$ where we need to do $\tildeO{3^i}$ amount of work to detect with threshold $\Omega(3^i \|\alpha_Q\|_2^2)$. This approach is widely known as \emph{Levin's Work Investment Strategy}~\cite{levin1985oneway}. It spans various theoretical computer science domains, including property testing~\cite{goldreich2014input}, graph theory~\cite{goldreich2002testing}, quantum state certification~\cite{Yu2023almost}, and sublinear algorithms~\cite{berman2014lp}. This enables us to use a randomized measurement scheme of uniformly sampling Pauli bases and allocate multiple copies for each $P$ to test if $L_P$ supersedes a pre-defined threshold.

Say we find a $P$ with influence $\bigOmega{3^i \|\alpha_Q\|_2^2}$. How would we test for $L_P$? It is natural to try to directly compute the statistic $L_P$ from multiple observations under the outcome distribution of basis $P$. However, the issue is that the statistics for computing $\alpha_{X \eye}$ and $\alpha_{XX}$ are highly correlated, resulting in a blowup in variance. We overcome this hurdle by splitting the statistic by weight. In particular, there exists a weight $w \in [N]$ that has a large norm:
\begin{align*}
\sum_{|S|=w} \alpha_P(S)^2 \geq \Delta_w  = \frac{3^i \|\alpha_Q\|_2^2}{3^w N}.
\end{align*}
In fact, we can see that the norm of these $\binom{\nqubits}{w}$ coefficients dominates the total norm of the $2^\nqubits$  coefficients by factor $\tildeOmega{3^{i-w}}$. Taking advantage of the following fact, we can actually perform the weight test for each $w \in [\nqubits]$ with $\bigO{\sqrt{\binom{N}{w}}/\Delta_w}$ allocated copies for the fixed Pauli basis $P$. This result is reminiscent of Gaussian mean testing, where classically testing whether $X \sim \mathcal{N}(0, \eye_d)$ or $X \sim \mathcal{N}(\mu, \eye_d)$ can be done with $\bigTheta{\sqrt{d}/\|\mu\|_2^2}$ samples, despite the statistics of the Pauli coefficients being highly correlated.

In order to attain these results, we make an important observation regarding the relationship between the Pauli coefficients and the outcome distribution of a Pauli basis.
\begin{observation}[Fourier Expansion of Pauli Outcome Distributions] \label{obs-pauli-fourier}
The outcome distribution of measurement $P$ on state $\rho$ with Pauli decomposition $\{\alpha_Q\}_{Q \in \mathcal{Q}}$ is
\begin{align*}
\Tr[M^{(P)}_x \rho] = \frac{1}{2^N} \sum_{S \subseteq [\nqubits]} \Tr[P^S \rho] \cdot \parity{x} = \frac{1}{2^N} + \frac{1}{2^N} \sum_{\emptyset \subset S \subseteq [\nqubits]} \alpha_{P}(S) \cdot \parity{x}
\end{align*}
\end{observation}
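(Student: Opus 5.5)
The plan is to write the Pauli basis measurement explicitly as a product POVM and expand the tensor product. For $P=P_1\otimes\cdots\otimes P_\nqubits$ with each $P_i\in\{X,Y,Z\}$, measuring in basis $P$ means measuring each qubit $i$ in the eigenbasis of $P_i$. Each $P_i$ has eigenvalues $\pm 1$, with spectral projectors $(\eye_2\pm P_i)/2$. Indexing outcomes by $x\in\{0,1\}^\nqubits$, where $x_i=0$ corresponds to eigenvalue $+1$, the POVM element is
\begin{align*}
M^{(P)}_x=\bigotimes_{i=1}^{\nqubits}\frac{\eye_2+(-1)^{x_i}P_i}{2}.
\end{align*}

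Next, expand this product by distributivity over the tensor factors. Choosing, for each $i$, either the $\eye_2$ term or the $(-1)^{x_i}P_i$ term amounts to choosing a subset $S\subseteq[\nqubits]$ of positions where $P_i$ is taken. This gives
\begin{align*}
M^{(P)}_x=\frac{1}{2^\nqubits}\sum_{S\subseteq[\nqubits]}\parity{x}\,P^S,
\end{align*}
where $P^S$ has $P_i$ in the positions $i\in S$ and $\eye_2$ elsewhere, and $\parity{x}=\prod_{i\in S}(-1)^{x_i}$ is the usual Boolean parity character. Multiplying by $\rho$ and taking the trace, linearity yields the first equality of the observation.

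For the second equality, substitute the Pauli decomposition $\rho=\sum_{Q\in\mathcal{Q}}\frac{\alpha_Q}{d}Q$ and use the orthogonality $\Tr[QQ']=d\,\delta_{Q,Q'}$ of Pauli strings, which holds because each single-qubit Pauli squares to $\eye_2$ and distinct single-qubit Paulis have traceless products. This gives $\Tr[P^S\rho]=\alpha_{P^S}$, which is by definition $\alpha_P(S)$, the coefficient of the observable agreeing with $P$ on $S$ and equal to $\eye$ off $S$. The $S=\emptyset$ term is $\Tr[\rho]=1$ with $\chi_\emptyset\equiv 1$, which produces the leading constant $1/2^\nqubits$ (equivalently, $\alpha_{\eye}=1$). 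Separating this term from the sum finishes the identity.

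There is no real obstacle; the only care needed is bookkeeping. The outcome labeling and sign convention must be fixed so that $\parity{x}$ matches the eigenvalue signs. The convention that $\alpha_P(S)$ refers to $P^S$ must be stated explicitly, so that the sum over $Q\triangleleft P$ in the definition of $L_P$ corresponds exactly to the nonempty $S\subseteq[\nqubits]$ appearing here.
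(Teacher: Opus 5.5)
Your proposal is correct and follows the same route the paper takes implicitly: you expand the product POVM $M^{P}_x=\bigotimes_{i}\frac{\eye+x_iP_i}{2}$ from Observation~\ref{observ} into $\frac{1}{2^\nqubits}\sum_{S\subseteq[\nqubits]}\parity{x}\,P^S$, then use Pauli orthogonality to get $\Tr[P^S\rho]=\alpha_P(S)$, with the $S=\emptyset$ term giving $\Tr[\rho]=1$. Your $\{0,1\}$ outcome labeling with $\parity{x}=\prod_{i\in S}(-1)^{x_i}$ is only a relabeling of the paper's $x\in\{-1,1\}^\nqubits$, and your $\alpha_{\eye}=1$ is the normalization actually consistent with $\rho=\sum_Q\frac{\alpha_Q}{d}Q$ and with the statement; the value $\alpha_{\eye_d}=1/d$ written in the preliminaries is a slip.
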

The Fourier expansion of the outcome distribution under $P$ is precisely the Pauli coefficients under $Q \triangleleft P$. In other words, $\alpha_{P}(S)$ corresponds to correlation under the set $S$ for outcome distributions of the measurements $P^S \triangleleft P$. We  leverage ideas in the analysis of Boolean functions to test weight $w$ coefficients of an unknown distribution.

\subsection{Correlation-Concentrated Uniformity Testing via Polynomial Projection}
Recall, the goal is to try to test the norm of weight $w$ Fourier coefficients of a distribution supported on the Boolean hypercube. This aforementioned problem has strong ties to uniformity testing, as Parseval's theorem states that $d \|p-U_d\|_2^2 = \|\alpha\|_2^2$, where $\alpha(S)$ are the coefficients of the Fourier expansion of $p$. It is well known that the optimal sample complexity of testing uniformity is $\bigTheta{\sqrt{d}/\eps^2}$ \cite{DiakonikolasGPP19}, which is achieved by estimating the collision probability $\|p\|_2^2$. If we simply used this subroutine to test the norms of each weight in $L_P$, then we can only prove a bound that is worse than $4^\nqubits/\eps^2$. This suggests that we need to leverage concentration on weight $w$ coefficients to perform uniformity testing with $o(\sqrt{d}/\eps^2)$ samples. 

The notion of concentration that we consider is defined by $(\Delta, k, \beta)$-wise correlatedness, which holds if a distribution $p$ follows $\sum_{|S|=k} \alpha(S)^2 \geq \beta \|\alpha\|_2^2$ and the norm of the weight $k$ coefficients clear some threshold $\beta \Delta$. It turns out that we can test uniformity with much fewer number of samples under these specific classes of distributions!
 
\begin{theorem}[Correlation-Concentrated Uniformity Testing]\label{thm:sparse-test} Let $\mathcal{P}^{(\Delta, k, \beta)}$ be the collection of distributions that are $(\Delta,k, \beta)$-wise correlated. Then, \cref{alg:sparse} uses $\bigO{\frac{\sqrt{\binom{N}{k}}}{\beta \Delta} \ln \frac{1}{\delta}}$ samples to test whether $p=u$ or $p \in \mathcal{P}^{(\Delta, k, \beta)}$ with probability at least $1-\delta$.
\end{theorem}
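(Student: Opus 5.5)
We use the U-statistic estimator obtained by projecting onto the degree-$k$ level of the Fourier expansion. Given samples $x_1,\dots,x_n$ from $p$ on $\{\pm1\}^N$, define
\begin{align*}
T \eqdef \frac{1}{\binom{n}{2}} \sum_{a<b} \sum_{|S|=k} \parity{x_a}\parity{x_b}.
\end{align*}
Since $\mathbb{E}_{x\sim p}[\parity{x}] = \alpha(S)$, which follows from \cref{obs-pauli-fourier}, this is an unbiased estimator of $\sum_{|S|=k}\alpha(S)^2$. Under $p=u$ we have $\mathbb{E}[T]=0$. Under $p\in\mathcal{P}^{(\Delta,k,\beta)}$ we have $\mathbb{E}[T]\ge \beta\Delta$. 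The test thresholds $T$ at $\beta\Delta/2$. We then boost the success probability from constant to $1-\delta$ by a median over $O(\ln(1/\delta))$ independent batches, which accounts for the $\ln\frac1\delta$ factor.

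\textbf{Variance.} The work is in bounding $\Var[T]$. We write $K(x,y)=\sum_{|S|=k}\chi_S(x)\chi_S(y)$, which is a Krawtchouk polynomial in the Hamming distance of $x$ and $y$. The standard U-statistic decomposition gives
\begin{align*}
\Var[T] \lesssim \frac{1}{n^2}\,\mathbb{E}_{x,y\sim p}[K(x,y)^2] + \frac{1}{n}\,\Var_{x\sim p}\Bigl[\textstyle\sum_{|S|=k}\alpha(S)\parity{x}\Bigr].
\end{align*}
\begin{itemize}
\item Under $u$, orthonormality of characters gives $\mathbb{E}_u[K^2]=\binom{N}{k}$, and the linear term vanishes. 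So $n\gtrsim \sqrt{\binom Nk}/(\beta\Delta)$ suffices for the null case.
\item Under the alternative, let $f_k=\sum_{|S|=k}\alpha(S)\chi_S$ and $\theta = \sum_{|S|=k}\alpha(S)^2 = \mathbb{E}[T]$. The linear term is at most $\mathbb{E}_p[f_k^2] = \mathbb{E}_u[f_k^2] + \sum_{\emptyset\neq S}\alpha(S)\,\widehat{f_k^2}(S)$.
\item For the quadratic term, expanding $K(x,y)^2$ gives
\begin{align*}
\mathbb{E}_{p\otimes p}[K^2] = \sum_{|S|=|S'|=k}\alpha(S\triangle S')^2,
\end{align*}
with the convention $\alpha(\emptyset)=1$.
\end{itemize}

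\textbf{Bounding the cross terms.} The goal is to bound these cross terms by $\binom Nk + \binom Nk^{1/2}\,\theta/\beta$-type quantities. We use Cauchy–Schwarz together with the fact that each set $T'$ arises as $S\triangle S'$ for at most $\binom{N}{k}$ pairs. This gives
\begin{align*}
\sum_{S,S'}\alpha(S\triangle S')^2 \le \binom Nk\bigl(1+\|\alpha\|_2^2\bigr) \le \binom Nk\bigl(1+\theta/\beta\bigr),
\end{align*}
using $\theta\ge\beta\|\alpha\|_2^2$.

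Chebyshev then requires $\theta^2 n^2 \gtrsim \binom Nk(1+\theta/\beta)$. Using $\theta\ge\beta\Delta$, it suffices that:
\begin{itemize}
\item $n\gtrsim\sqrt{\binom Nk}/\theta$, which is implied by $n\gtrsim\sqrt{\binom Nk}/(\beta\Delta)$;
\item $n\gtrsim\sqrt{\binom Nk/(\beta\theta)}$, which is also dominated, since $\beta\theta\ge(\beta\Delta)^2$ up to the normalization $\Delta\lesssim 1$.
\end{itemize}
The linear term needs $n\gtrsim \mathbb{E}_p[f_k^2]/\theta^2$. For this we need $\mathbb{E}_p[f_k^2]\lesssim \theta + \theta\|\alpha\|_2\cdot(\ldots)$, which we handle via hypercontractivity-free Cauchy–Schwarz. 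Concretely, we bound $\mathbb{E}_p[f_k^2]\le \|p\cdot 2^N\|_2\,\|f_k^2\|_2$ in $L^2(u)$, and we expect to need degree-$k$ structure to control $\|f_k\|_4$.

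\textbf{Main obstacle.} The hardest step is this linear (first-order) variance term under the alternative. Bounding $\mathbb{E}_p[f_k^2]$ requires controlling how $p$ correlates with $f_k^2$. My first attempt would be the crude bound $\mathbb{E}_p[f_k^2]\le \theta(1+\|\alpha\|_2^2)$ via the convolution identity, then absorbing it using $\|\alpha\|_2^2\le\theta/\beta$. If that fails, I would fall back to a Poissonized/collision-style argument restricted to the level-$k$ projection, mirroring the $\ell_2$ uniformity tester of \cite{DiakonikolasGPP19}.
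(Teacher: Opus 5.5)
There is a genuine gap, and it is at the step you flag yourself. Your setup is otherwise sound. Your one-sample U-statistic is the analogue of the paper's two-sample $W_k$; the theorem is stated for the latter, but the same variance terms appear. Your first-order term $\mathbb{E}_p[f_k^2]$ is exactly the paper's ``pairs sharing one sample'' term $\mathbb{E}[K(X,Y)K(X,Y')]$. Your bound on the second-order term, $\mathbb{E}_{p\otimes p}[K^2]\le\binom{N}{k}(1+\|\alpha\|_2^2)$, is correct.

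The problem is that the first-order term under the alternative is never actually bounded, and the bound you plan to try first is false. Since $\mathbb{E}_p[f_k^2]=\theta+\sum_{S_1\neq S_2}\alpha(S_1)\alpha(S_2)\alpha(S_1\triangle S_2)$, the correction is cubic in $\alpha$, so it cannot be dominated by the quartic quantity $\theta\|\alpha\|_2^2$. Concretely, take $N=2$, $k=1$ and $p(x)=\frac14(1+t x_1+t x_2+t x_1x_2)$ with $0<t<1/3$. Then $\mathbb{E}_p[f_1^2]=2t^2+2t^3$, which exceeds $2t^2(1+3t^2)=\theta(1+\|\alpha\|_2^2)$.

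Your fallback, controlling $\|f_k\|_4$ through degree-$k$ structure (Bonami/hypercontractivity), is too lossy. It only yields $\|f_k^2\|_2\le 3^k\theta$, whereas you need $\sqrt{\binom{N}{k}}\,\theta$. The loss factor $3^k/\sqrt{\binom{N}{k}}$ is exponential in $N$ for $k\approx 9N/10$, which is exactly the regime the Pauli tester relies on.

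The missing idea is to reuse the counting fact you already proved: each nonempty $T$ equals $S_1\triangle S_2$ for at most $\binom{N}{k}$ pairs. Set $\tilde{\alpha}=\alpha$ on nonempty sets and $\tilde{\alpha}(\emptyset)=1$, and apply Cauchy--Schwarz over pairs $(S_1,S_2)$:
\[
\mathbb{E}_p[f_k^2]=\sum_{|S_1|=|S_2|=k}\tilde{\alpha}(S_1\triangle S_2)\,\alpha(S_1)\alpha(S_2)\le\Bigl(\sum_{|S_1|=|S_2|=k}\tilde{\alpha}(S_1\triangle S_2)^2\Bigr)^{1/2}\theta\le\sqrt{\binom{N}{k}}\,\bigl(1+\|\alpha\|_2\bigr)\,\theta .
\]
This is what the paper does. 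Equivalently, in your $L^2(u)$ formulation, the same counting gives $\|f_k^2\|_2\le\sqrt{\binom{N}{k}}\,\theta$ with no appeal to hypercontractivity.

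With $\|\alpha\|_2^2\le\theta/\beta$, the first-order term then only requires $n\gtrsim\sqrt{\binom{N}{k}}\,\bigl(1/\theta+1/\sqrt{\beta\theta}\bigr)$. This is dominated by $\sqrt{\binom{N}{k}}/(\beta\Delta)$ because $\theta\ge\beta\Delta$ and $\Delta\le 1$. Once this bound is in place, your Chebyshev step and median amplification go through as written.
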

The main idea behind the tester is to refine our notion of collision testing on the hypercube. Consider the original collision-based tester using sets of disjoint samples $X_1, \ldots, X_n$ and $Y_1, \ldots, Y_n$,
\begin{align*}
    W = \frac{d}{n^2} \sum_{i=1}^n \sum_{j=1}^n \indic{X_i = Y_j},
\end{align*}
which is unbiased estimator of $d \|p\|_2^2$. Now, we will now closely look at the indicator function $\indic{x = y}$. We can rewrite it as the following,
\begin{align*}
    d \indic{x = y} = \sum_{S \subseteq [\nqubits]}\parity{x \oplus y},
\end{align*}
as the sum over all parities will only be non-zero if $x \oplus y = 1^{\nqubits} \Leftrightarrow x=y$, where $\oplus$ is a elementwise product. To see this, take a vector $z$ that has $z_i = -1$, then we can split the sum into two parts:
\begin{align*}
\sum_{S\subseteq[\nqubits]} \parity{z} = \sum_{i \in S} \parity{z} + \sum_{i \notin S} \parity{z}. 
\end{align*}
It is clear that $\sum_{S\subseteq[\nqubits]} \parity{z} = 0$ , because $\sum_{i \notin S} \parity{z} =  - \sum_{i \in S} \parity{z}$ whenever $z$ contains a $-1$ coordinate. The indicator function can be seen as the sum of all set-wise correlations of the collided vector $x \oplus y$, and it enables us to learn about $\|\alpha\|_2^2$. What if we restricted this function to evaluate set-wise correlations of weight $k$? Then, the function is described by the Krawtchouk Polynomial with symmetry in Hamming distance: $d_H(x,y) = \sum_{i=1}^\nqubits x_i \cdot y_i $.
\begin{align*}
    \sum_{|S|=k}\parity{x \oplus y} = \krawt{k}{d_H(x, y)}, \;\krawt{k}{t} = \sum_{j=0}^k (-1)^j \binom{t}{j} \binom{N-t}{k-j}.
\end{align*}
Thus, the indicator can be interpreted as a polynomial of degree $\nqubits$.
\begin{align*}
     d \indic{x = y} = \sum_{k=1}^{\nqubits} \krawt{k}{d_H(x, y)}.
\end{align*}
Since the degree $\nqubits$ polynomial of degree $\nqubits$ learns about $\|\alpha\|_2^2$, it would intuitively make sense that the polynomial of order $k$ will learn about $\|\alpha^{=k}\|_2^2$. In other words, we can try to estimate $\|\alpha^{=k}\|_2^2$ with
\begin{align*}
     W_k = \frac{1}{n^2} \sum_{i=1}^n \sum_{j=1}^n \krawt{k}{d_H(X_i, Y_j)}.
\end{align*}
Using the properties on convolution and inner product, we can show that $W_k$ is indeed an unbiased estimator of $\|\alpha^{=k}\|_2^2$. Furthermore, we can exploit the sparseness and symmetry of the Krawtchouk polynomial to show that $W_k$ only suffers from a standard deviation scaling of $\sqrt{\binom{N}{k}}$ as opposed to the $\sqrt{d}$ scaling of $W$. On top of being sample-efficient, we also can show that the tester is computationally effieicent,
\begin{corollary}[Efficient Correlation-Concentrated Uniformity Testing] \label{cor:eff-sparse} $W_k$ from~\cref{alg:sparse} admits an implementation running in $\bigO{n^2 \cdot k\nqubits\log\nqubits}$ time under the bit complexity model. Thus, a correlation-concentrated uniformity tester can be implemented in $poly(n, N, k)$ time.
\end{corollary}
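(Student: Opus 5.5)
The plan is to account for the cost of each term in $W_k$ directly. There are $n^2$ pairs $(X_i,Y_j)$, and for each pair we must do two things. First, compute the Hamming-type statistic $t=d_H(X_i,Y_j)$, i.e.\ the number of coordinates where the two $\pm1$ strings disagree; equivalently, $t$ is determined by $\sum_{\ell} x_\ell y_\ell = N-2t$. Second, evaluate $\krawt{k}{t}$. With $N$-bit inputs, the first step takes $O(N)$ bit operations: XOR the two words and count the ones, which is even faster with word-level popcount. The total cost over all pairs is then $n^2$ times the cost of evaluating $\krawt{k}{t}$, plus $O(n^2 N)$ for computing the distances. Summing the $n^2$ values and dividing by $n^2$ adds only lower-order arithmetic on $O(N)$-bit numbers, since $|\krawt{k}{t}|\le \binom{N}{k}\le 2^N$.

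The main step, and the one I expect to need care, is evaluating $\krawt{k}{t}$ in $O(kN\log N)$ bit operations. I would avoid the direct alternating binomial sum, since it involves large binomials and cancellation. Instead I would use the standard three-term recurrence in the degree,
\begin{align*}
(j+1)K^{(N)}_{j+1}(t) = (N-2t)K^{(N)}_j(t) - (N-j+1)K^{(N)}_{j-1}(t),
\end{align*}
with $K^{(N)}_0(t)=1$ and $K^{(N)}_1(t)=N-2t$. This computes $K^{(N)}_k(t)$ in $k$ steps. Each step multiplies an $O(N)$-bit integer by an $O(\log N)$-bit integer and divides exactly by $j+1$ (exactness holds because Krawtchouk values are integers). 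With schoolbook arithmetic each such operation costs $O(N\log N)$ bit operations, giving $O(kN\log N)$ per pair. The bound $|K_j^{(N)}(t)|\le\binom{N}{j}\le 2^N$ keeps every intermediate value at $O(N)$ bits.

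As an optional refinement, $\krawt{k}{t}$ depends only on $t\in\{0,\dots,N\}$. One could therefore tabulate all $N+1$ values once, at cost $O(N\cdot kN\log N)$, and then do a table lookup per pair. This is dominated by the $n^2$ term whenever $n^2\ge N$ and would only improve the bound, so the stated $O(n^2 kN\log N)$ certainly holds.

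Putting the pieces together, the total running time is $O(n^2(N + kN\log N))=O(n^2 kN\log N)$, which is polynomial in $n,N,k$. By \cref{thm:sparse-test} the tester only needs to compute $W_k$ and compare it against a threshold, so this yields the claimed efficient implementation.
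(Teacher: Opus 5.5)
Your proposal is correct and follows essentially the same route as the paper: evaluate $K^{(N)}_k(t)$ for each of the $n^2$ pairs with a three-term recurrence in the degree, using $k$ steps that each cost $O(N\log N)$ bit operations on $O(N)$-bit integers, plus $O(n^2N)$ for the Hamming distances and the summation. Your version is slightly cleaner. Your recurrence $(j+1)K^{(N)}_{j+1}(t)=(N-2t)K^{(N)}_j(t)-(N-j+1)K^{(N)}_{j-1}(t)$ is the correct one for this normalization, whereas the recurrence printed in the paper appears garbled (it does not return $K^{(N)}_2(0)=\binom{N}{2}$). Also, exact division and multiplication by $O(\log N)$-bit coefficients with schoolbook arithmetic already give $O(N\log N)$ per step, without fast integer multiplication or Newton--Raphson division.
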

\noindent \paragraph{Comparison to other approaches.} A similar problem to correlation-concentrated testing is testing $k$-wise uniformity, where the goal is to determine whether a distribution behaves like a uniform distribution across $k$ coordinates or $\epsilon$-far from every such distribution. A norm estimator has been proposed to solve this problem~\cite[Theorem 9]{odonnell2018closenesskwiseuniformity}, and it yields optimal results when assuming $k$ is constant. The estimator estimates the norm of the coefficients from $1,2,\ldots k$ by directly computing the parity functions of size $\leq k$,  which led to a standard deviation scaling of $\bigO{2^k N^{k/2}}$. However, it is possible refine the standard deviation scaling to $\tildeO{\sqrt{\binom{N}{k}}}$ by analyzing the binomial coefficients in the variance more precisely, which leads to comparable statistical performance to our polynomial estimator.

We highlight the computational efficiency of our approach, requiring only $\bigO{kN \log N}$ time to evaluate the Krawtchouk polynomial. On the other hand, the parity-based estimator requires $\Omega(\binom{N}{k})$ time to evaluate the sum of cross products for every parity, resulting in potentially exponentially extra overhead.
\subsection{Where is \texorpdfstring{$\sqrt{10}$}{sqrt(10)}?}
From the above sections, it still may not be clear why the $\sqrt{10}$ factor appears for mixedness testing. We show that this number naturally arises from the combinatorial factors introduced by correlation-concentrated testing and Pauli influence detection. Recall that we are guaranteed to obtain at least one ``good'' measurement with influence $L_P = \Omega(3^i \eps^2)$ when sampling $\tildeO{3^i}$ measurement bases for some $i \in [\nqubits]$. For this particular measurement basis, we can guarantee that one of the Pauli weights dominates the total influence by factor $\tildeOmega{3^{i-w}}$.

We utilize our correlation-concentrated testing subroutine to test each isolated weight with $\tildeO{3^{w-i} \sqrt{\binom{\nqubits}{w}}/\eps^2}$ samples/copies from the outcome distribution of the ``good'' measurement basis for every $w \in [\nqubits]$. Now, we can tally up the total number of copies of $\rho$ used for our mixedness tester. Considering that we must test every $i \in [\nqubits]$ for the work investment strategy,
\begin{align*}
n = \sum_{i=1}^\nqubits \sum_{w=1}^\nqubits \tildeO{\frac{3^w \sqrt{\binom{\nqubits}{w}} }{3^i \eps^2} \cdot 3^i} = \tildeO{\frac{\max_{w} c(w)}{\eps^2}}.
\end{align*}
Now, it suffices to optimize over the combinatorial quantity: $c(w) = 3^w \sqrt{\binom{\nqubits}{w}} = \sqrt{9^w \binom{\nqubits}{w}}$. Using the fact that $\sum_{w=1}^\nqubits 9^w \binom{\nqubits}{w} = 10^\nqubits$, we can show that $c(w) \leq \sqrt{10}^\nqubits$. This is nearly tight since we can use Stirling's approximation to show $c(9\nqubits/10) = \tildeOmega{\sqrt{10}^\nqubits}$. Thus, we have reached the conclusion that
\begin{align*}
    n = \tildeO{\frac{\sqrt{10}^\nqubits}{\eps^2}}.
\end{align*}
Interestingly enough, the weight $9\nqubits/10$ holds a significant meaning in the lower bound as well. The hard-instance construction concentrates perturbations around the maximally mixed state with Pauli observables of weight at least $9\nqubits/10$. This gives us the crucial intuition that, like single-qubit tomography, the copy-complexity of single-qubit mixedness testing is dictated by the observables of weight $9\nqubits/10$!

\section{Preliminaries}
\subsection{Basic quantum mechanics}
An isolated physical system is associated with a
Hilbert space is called the { state space}. A { pure state} of a
quantum system is a normalized vector in its state space, denoted by the Dirac notation $\ket{\varphi}$. A
{\it mixed state} is represented by a density operator on the inner product space with $\dotprodH{\rho}{\sigma} = \tr(\rho \sigma)$. Here, a density operator $\rho$ on $d$-dimensional Hilbert space $\C^d$ is a
positive semi-definite linear operator such that $\Tr(\rho)=1$.
We let
\begin{align*}
\mathcal{D}({\C^{d}})=\{\rho: \rho~\mathrm{is}~d\mathrm{-dimensional~density~operator~of~}\C^d\}
\end{align*}
denote the set of quantum states.

We use $I$, $X, Y$, and $Z$ to denote the Pauli matrices, the most important one-qubit matrix,
\begin{align*}
I=\begin{bmatrix}1 &0\\0&1\end{bmatrix}, X=\begin{bmatrix}0 &1\\1&0\end{bmatrix}, Z=\begin{bmatrix}1 &0\\0&-1\end{bmatrix}, Y=\begin{bmatrix}0 &i\\-i&0\end{bmatrix}.
\end{align*}

A important fact is that $\{I,X,Y,Z\}$ is a orthogonal basis for mixed states. Any mixed state $\rho$ can be written as
\begin{align*}
    \rho = \frac{\eye}{2} + \frac{\alpha_X}{2} X + \frac{\alpha_Y}{2} Y + \frac{\alpha_Z}{2} Z.
\end{align*}
We denote $\{\alpha_Q\}_{Q \in \{I,X,Y,Z\}}$ to be the \emph{Pauli decomposition} of $\rho$. Note, we fix $\alpha_\eye = \frac{1}{2}$ to ensure $\tr(\rho) = 1$ and $\sum_{P\in \{X,Y,Z\}} \alpha_P^2 \leq 1$ as a necessary condition for positive semi-definiteness.

\subsection{The tensor product of Hilbert space}

The state space of a composed quantum system is the tensor product of the state spaces of its component systems.
Let $\C^{d_k}$ be a Hilbert space. One can define a Hilbert space $\bigotimes_{k=1}^{n}\C^{d_k}$ as the tensor product of Hilbert spaces $\C^{d_k}$.
The following notation denotes the quantum state on the multipartite system $\bigotimes_{k=1}^{n}\C^{d_k}$,
\begin{align*}
\mathcal{D}(\otimes_{i=1}^{\nqubits}\C^{d_i})=\mathcal{D}(\C^{\Pi_{i=1}^{\nqubits} d_i}).
\end{align*}
An interesting consequence of tensor product is that a basis of $\mathcal{D}(\otimes_{i=1}^{\nqubits}\C^{d_i})$ can be written as the tensor product of the basis for each $\mathcal{D}(\C^{d_i})$. Namely, if we used the Pauli basis for a single-qubit, then $\mathcal{Q} = \{I,X,Y,Z\}^{\otimes \nqubits}$ serves as a basis for $\mathcal{D}(\C^d)$ when $d = 2^N$. Any $\nqubits$-qubit state $\rho \in \mathcal{D}(\mathbb{C}^d)$ can be written as
\begin{align*}
    \rho = \sum_{Q \in \mathcal{Q}} \frac{\alpha_Q}{d} Q.
\end{align*}
Here, we have the Pauli decomposition $\{\alpha_Q\}_{Q \in \mathcal{Q}}$ and similar constraints to the single-qubit decomposition with $\alpha_{I_d} = \frac{1}{d}$ and $\sum_{Q \in \mathcal{Q}} \alpha_Q^2 \leq d - 1$. We will call $\qmm = \frac{I_{d}}{d}$ the maximally mixed state of $\mathcal{D}(\C^{d})$.

\subsection{Quantum measurement}

\paragraph{Positive operator-valued measures. } A positive operator-valued measure (POVM) is a measurement with non-negative self-adjoint operators in a Hilbert space $\C^{d}$, which is
described by a collection of matrices $\{M_i\}$ with $M_i\geq 0$ and
\begin{align*}\sum_{i}M_i=I_{d}.\end{align*}
If the state of a quantum system was $\rho$
 before measurement $\{M_i\}$ was performed,
the probability of $i$ occurring is
\begin{align*}p(i)=\dotprodH{M_i}{\rho} = \tr(M_i\rho).\end{align*}
A Hermitian $O=\sum_j \lambda_j \op{\psi_j}{\psi_j}$ (that is $O=O^{\dag}$) on $\mathcal{H}$ with orthonormal basis $\ket{\psi_j}$ and $\lambda_j\in\mathbb{R}$ always corresponds to the following measurement protocol.
Suppose the state of a quantum system is $\rho$ when measured in basis $\op{\psi_j}{\psi_j}$. If the outcome is $j$, the observed result is $\lambda_j$, and the output corresponds directly to a random variable $X$ such that
\begin{align*}
p(X=\lambda_j)&=\tr(\rho\op{\psi_j}{\psi_j}).
\end{align*}
We will denote the subset of POVMs that we are allowed to perform to be $\povmset$.

\paragraph{Measurement information channel.} The \emph{Measurement Information Channel} (MIC) describes the projected state after performing a POVM, and it serves as a vital object for quantifying the ``distinguishing power'' of a measurement.
\begin{definition} \label{def:mic}
Let $\POVM=\{M_x\}_{x \in \mathcal{X}}$ be a POVM. Then the Measurement Information Channel of $\POVM$ is defined as
\begin{align} 
    \Luders_\POVM(\rho) = \sum_{x} \frac{\dotprodH{M_x}{\rho} M_x}{\Tr[M_x]} = \sum_{x} \frac{\Tr(M_x \rho) M_x}{\Tr(M_x) }.
\end{align}
\end{definition}
\noindent The channel is a positive and trace-preserving map on $\mathcal{D}(\mathbb{C}^d)$.

\paragraph{Pauli Measurements.} For $U\in\{X,Y,Z\}$ with $U=\op{\psi_0}{\psi_0}-\op{\psi_1}{\psi_1}$, we use the Pauli measurement to denote the corresponding measurement,
\begin{align*}
M_+=\op{\psi_0}{\psi_0}, M_-=\op{\psi_1}{\psi_1}.
\end{align*}

For an $\nqubits$-qubit Pauli matrix $P=P_1P_2\cdots P_{\nqubits}\in \{I,X,Y,Z\}^{\otimes \nqubits}$ with $P\neq I^{\otimes \nqubits}$, we can always find orthogonal projections $Q_{+}$ and $Q_{-}$, i.e., $Q_{+}^2=Q_{+}$ and $Q_{-}^2=Q_{-}$, such that $P=Q_{+}-Q_{-}$ and $Q_{+}+Q_{-}= I^{\otimes \nqubits}$. Furthermore, the two outcome measurement
$\{Q_+,Q_-\}=\{\frac{I^{\otimes \nqubits}+P}{2},\frac{I^{\otimes \nqubits}-P}{2}\}$ can be implemented through the one-qubit Pauli measurements corresponding to $P = P_1 \otimes P_2 \otimes \ldots \otimes P_{\nqubits}$.

\medskip
\begin{observation}\label{observ}
For any
\[
P = P_1 \otimes P_2 \otimes \cdots \otimes P_{\nqubits}
\in \{X,Y,Z\}^{\otimes \nqubits},
\]
performing the measurement $P_i$ on the $i$-th qubit for each
$i\in[\nqubits]$ produces an $\nqubits$-bit outcome string $x \in \{-1,1\}^\nqubits$. The following POVM describes this Pauli basis measurement,
\begin{align*}
    M^{P}_x = \bigotimes_{i=1}^\nqubits \frac{I + x_i P_i}{2}.
\end{align*}

\noindent Moreover, the same outcome $x$ can also be interpreted as the
measurement outcome of any
\[
Q = Q_1 \otimes Q_2 \otimes \cdots \otimes Q_{\nqubits}
\in \{I,X,Y,Z\}^{\otimes \nqubits},
\]
provided that, for every $i\in[n]$, either $Q_i = P_i$ or
$Q_i = I$.
We say that such a Pauli operator $Q$ \emph{corresponds to} $P$, denoted as $Q\triangleleft P$. 

We will also denote $Q = P^{S}$ for $S \subseteq [\nqubits]$ to be the Pauli matrix where $Q_i = P_i$ if $i \in S$ and $Q_i = I$ otherwise and provide the shorthand $\alpha_P(S) = \alpha_{P^S}$. Indexing over all $S \subseteq [\nqubits]$ gives us all the Pauli operators that correspond to $P$.

\end{observation}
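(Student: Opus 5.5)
The plan is to verify the two claims of Observation~\ref{observ} directly from the single-qubit spectral decompositions, and then use a tensor-product expansion to show that the outcome string $x$ simultaneously realizes every two-outcome measurement $\{\frac{\eye+Q}{2},\frac{\eye-Q}{2}\}$ with $Q\triangleleft P$.

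\textbf{Step 1: the operators $M^P_x$ form a rank-one projective measurement.} For a single qubit and $P_i\in\{X,Y,Z\}$, the eigenvalues of $P_i$ are $\pm1$ and $P_i^2=I$. Hence $\frac{I+x_iP_i}{2}$ is the orthogonal projector onto the $x_i$-eigenspace of $P_i$, for $x_i\in\{-1,1\}$. These two projectors are positive, orthogonal, and sum to $I$, so they are exactly the operators $M_\pm$ of the single-qubit Pauli measurement. Tensor products of positive operators are positive, so each $M^P_x$ is a rank-one projector. Distributing the sum over $x\in\{-1,1\}^\nqubits$ across the tensor factors gives
\[
\sum_x M^P_x=\bigotimes_i \sum_{x_i}\frac{I+x_iP_i}{2}=\eye_d .
\]
When the qubits are measured independently, the Born rule for the product operator gives the probability of the joint string $x$ as $\Tr[M^P_x\rho]$. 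So the POVM $\{M^P_x\}_x$ describes performing $P_i$ on qubit $i$ for every $i\in[\nqubits]$.

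\textbf{Step 2: simulating $Q\triangleleft P$ by post-processing.} Fix $Q\triangleleft P$ and write $Q=P^S$, where $S=\{i:Q_i\neq I\}$; this is a bijection between subsets $S\subseteq[\nqubits]$ and the operators corresponding to $P$. The key identity to establish is
\[
\sum_x \parity{x}\,M^P_x = P^S .
\]
It factorizes over qubits. For $i\in S$ the factor is $\sum_{x_i}x_i\frac{I+x_iP_i}{2}=P_i$, and for $i\notin S$ it is $\sum_{x_i}\frac{I+x_iP_i}{2}=I$. Combining this identity with Step~1 gives
\[
\sum_{x:\parity{x}=\pm1}M^P_x=\frac{\eye\pm P^S}{2},
\]
which are exactly the projectors $Q_\pm$ of the two-outcome measurement of $Q$. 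Therefore the coarse-grained outcome $\parity{x}=\prod_{i\in S}x_i$ has the same distribution as the outcome of measuring $Q$ on $\rho$, which is what it means for $x$ to be interpreted as an outcome of $Q$.

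As a by-product, expanding the tensor product in the definition of $M^P_x$ gives $M^P_x=2^{-\nqubits}\sum_S \parity{x}P^S$. Taking the trace against $\rho$ then yields Observation~\ref{obs-pauli-fourier}.

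None of these steps is a real obstacle; everything reduces to single-qubit identities plus multiplicativity of the tensor product. The only point needing care is the bookkeeping between $S$ and $Q$: the identity factors must contribute no sign, which is why $\chi_S$ uses only the coordinates in $S$.
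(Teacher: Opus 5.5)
Your proposal is correct. The paper states this observation without proof: the preceding paragraph only asserts that $\{\frac{\eye\pm Q}{2}\}$ can be implemented by single-qubit Pauli measurements. Your factorized identity $\sum_x \chi_S(x)\, M^P_x = P^S$ is exactly the verification the paper leaves implicit, and your expansion $M^P_x = 2^{-\nqubits}\sum_S \chi_S(x)\, P^S$ is the identity underlying Observation~\ref{obs-pauli-fourier}. One small wording fix: $M^P_x$ is a rank-one projector because each factor projects onto a one-dimensional eigenspace of $P_i$, not merely because tensor products of positive operators are positive.
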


\subsection{Distances between states} \label{pre:dist}
\noindent For indexable set of numbers $t(i)$ for $i \in \mathcal{I}$, we refer to its $\ell_1$ norm as $\|t\|_1 = \sum_{i \in \mathcal{I}} |t(i)|$ and $\|t\|_2 = \sqrt{\sum_{i \in \mathcal{I}} |t(i)|^2}$ for its $\ell_2$ norm. Given a general operator $A$ with eigenvalues $\lambda_1, \ldots, \lambda_d$, the $\ell_1$ norm is defined as
\begin{align*}
||A||_1=\mathrm{Tr}|A| = \|\lambda\|_1,
\end{align*}
where $|A|\equiv\sqrt{A^\dag A}$ is the positive square root of $A^\dag A$. The $\ell_2$ norm is defined as
\begin{align*}
    ||A||_2=\sqrt{\mathrm{Tr}(A^2)} = \|\lambda\|_2.
\end{align*}
We denote the corresponding $\ell_1$ and $\ell_2$ distances to be $\|\rho - \sigma\|_1$ and $\|\rho - \sigma\|_2$, respectively.
The $\ell_2$ distance has a particularly nice form in terms of the Pauli coefficients due to the fact that $\mathcal{Q}$ (with normalization factor $\frac{1}{\sqrt{d}}$) form an orthogonal basis of $\mathcal{D}(\mathbb{C}^d)$. Let $\{\alpha_Q\}_{Q \in \mathcal{Q}}$ and $\{\beta_Q\}_{Q \in \mathcal{Q}}$ be the Pauli decompositions of $\rho$ and $\sigma$. Then,
\begin{align*}
    ||\rho-\sigma||_2^2 = \sum_{Q \in \mathcal{Q}} \frac{(\alpha_Q - \beta_Q)^2}{d} = \frac{1}{d} \|\alpha - \beta\|_2^2,
\end{align*}
For $\rho,\sigma\in\mathcal{D}(\C^{d})$, we have the following relation between $\ell_1$ and $\ell_2$ distances,
\begin{align*}
||\rho-\sigma||_2\leq ||\rho-\sigma||_1\leq \sqrt{d}||\rho-\sigma||_2 = \|\alpha-\beta\|_2.
\end{align*}
We also have the relation $\tracenorm{\rho - \sigma} \leq \sqrt{2r} \|\rho - \sigma\|_2 $ for $r = \max\{rank(\rho), rank(\sigma)\}$.
\subsection{Fourier analysis of Boolean Functions}
We will go over some basic facts regarding Fourier analysis of Boolean functions. For a more comprehensive review of the subject, we refer the reader to the excellent book \cite{odonnell2021analysisbooleanfunctions}.
\paragraph{Fourier Identities.} Consider a real-valued function on the Boolean hypercube $f: \{-1,1\}^\nqubits \rightarrow \mathbb{R}$. Define the parity functions on the hypercube as $\parity{x} \eqdef \prod_{s \in S} x_i$ for $S \subseteq [\nqubits]$. The parity functions serve as an orthogonal basis under the inner product space $\dotprodB{f}{g} \eqdef \expectDistrOf{\unif[\{-1,1\}^\nqubits]}{f(x)g(x)}$. Thus, $f$ can be uniquely expressed as a multi-linear polynomial:
\begin{align*}
    f(x) = \sum_{S \subseteq [N]} \alpha(S) \cdot \parity{x}.
\end{align*}
A consequence of these Fourier expansions is Parseval's Theorem. Given $g(x) = \sum_{S \subseteq [N]} \beta(S) \parity{x}$,
\begin{align*}
    \dotprod{f}{g}_{B} = \sum_{S \subseteq [\nqubits]} \alpha(S) \beta(S).
\end{align*}
Let's define the convolution of two functions $f * g (x) \eqdef \expectDistrOf{\unif[\{-1,1\}^\nqubits]}{f(y)g(x \oplus y)}$, where we denote $x \oplus y$ as element-wise multiplication. We have the following Fourier-analytic identity,
\begin{align*}
    f * g (x) = \sum_{S \subseteq [\nqubits]} \alpha(S) \beta(S) \cdot \parity{x}.
\end{align*}
\paragraph{Distributions on the Hypercube.} A distribution on the Boolean hypercube has the following Fourier expansion,
\begin{align*}
p(x) = \frac{1}{2^\nqubits} + \frac{1}{2^\nqubits} \sum_{\emptyset \subset S \subseteq [\nqubits]} \alpha(S) \cdot  \parity{x},
\end{align*}
where $\alpha(S) \in [-1,1]$ are chosen to guarantee non-negativity over $x \in \{-1,1\}^N$. The constraint $\alpha_{\emptyset} = 1$ is necessary to ensure normalization of the distribution. We denote $\|\alpha^{=k}\| \eqdef \sum_{|S| = k} \alpha(S)^2$ to be the total weight of the $k$-th order correlations. We provide a useful notion of concentration on coordinate-wise correlations.
\begin{definition}[$(\Delta,k, \beta)$-wise correlated distributions] For $\Delta, \beta \in (0,1]$, a distribution supported on $x \in \{-1,1\}^\nqubits$,
\begin{align*}
    p(x) =\frac{1}{2^\nqubits} + \frac{1}{2^\nqubits} \sum_{\emptyset \subset S \subseteq [\nqubits]} \parity{x} \alpha(S)
\end{align*}
is \emph{$(\Delta,k, \beta)$-wise correlated} if $\|\alpha^{=k}\|_2^2 \geq \beta\|\alpha\|_2^2$ and $\|\alpha^{=k}\|_2^2 \geq \beta \Delta$.
\end{definition}

\paragraph{Krawtchouk Polynomials.} Krawtchouk polynomials will serve as an important tool for testing correlations on the hypercube.
\begin{definition}[Krawtchouk Polynomials] For $t,k \in [N]$, the $k$-th order Krawtchouk Polynomial is defined as
\begin{align*}
    \krawt{k}{t} \eqdef \sum_{j=0}^k (-1)^j \binom{t}{j} \binom{N-t}{k-j}.
\end{align*}
\end{definition}
These polynomials arise in our testing subroutines as they express the Hamming weight symmetry of functions with constant $k$-th order correlations,
\begin{align} \label{eq:k-sym}
        f(x) = \sum_{|S| = k} \parity{x} = K_k^{(N)}(w_{H}(x)),
\end{align}
where $w_H(x)$ denotes the total number of $-1$'s in x. We also define the Hamming distance between two vectors: $d_H(x,y) \eqdef \sum_{i=1}^N 1^{(1-x_i \cdot y_i)} = w_H(x \oplus y)$.

\section{The Upper Bound} \label{sec:ub}
We present our non-adaptive Pauli mixedness tester, which is optimal up to $poly(N)$ factors. To understand the main ideas of our protocol, we break the mixedness tester into two parts.
\begin{algorithm}[h]
\caption{Pauli Mixedness Tester}\label{alg:rand_test}
    \begin{algorithmic}[1] % [1] adds line numbers
    \Require Copy access to $\sigma^{\otimes n}$, $\eps \in (0,1)$
    \Ensure{
    $\begin{cases} 
    \text{\texttt{Maximally-Mixed}} & \text{w.p. } \frac{2}{3} \quad \text{if } \sigma = \rho_{\text{mm}} \\ 
    \text{\texttt{Not Maximally-Mixed}} & \text{w.p. } \frac{2}{3} \quad \text{if } \|\sigma - \rho_{\text{mm}}\|_1 \geq \varepsilon 
    \end{cases}$} 
    \For{$i=1,2, \ldots, N$}
        \State $\delta_i \gets 100^{-(i+N)}$, $l_i \gets \max\left\{1,\lfloor\log_3 \frac{3^{i}}{9N}\rfloor\right\}$
        \State $\mathcal{B}_i \gets 3^i \cdot N \cdot 100$ i.i.d samples from $\unif[\{X,Y,Z\}^{\otimes N}]$ 
        \For{$P \in \mathcal{B}_i$}
            \State  $r_P \gets$ measurement outcome distribution of measuring $\sigma$ in Pauli basis $P$ 
            \For{$w=l_i,l_i+1, \ldots N$}
                \State $\beta^{(w)}_{i} \gets \frac{3^i}{9N 3^w}$
                \State $T_w \gets \mathtt{KrawtCheck}_{r_P}\left(k=w, \eps^2, \beta^{(w)}_{i}, \delta_i \right)$
                \If{$T_w = \mathtt{YES}$ }
                    \State \Return $\mathtt{Not\;Maximally-Mixed}$
                \EndIf
            \EndFor
        \EndFor
    \EndFor 
    \State \Return $\mathtt{Maximally-Mixed}$
    \end{algorithmic}
\end{algorithm}
\paragraph{Influence detection.} The two higher-level for-loops perform a random search on the Pauli basis $\{X,Y,Z\}^{\otimes \nqubits}$ to find a $P$ with a large Pauli influence $L_P$, where~\cref{lem:pauli_work_invest} shows such a basis can be found through Levin's work investment strategy. At a high-level, the work strategy states that either there exists a large number of Pauli measurements that have a typical influence, or there exists a sparser set of Pauli measurements with high influence. Thus, we can sample many Pauli bases to hopefully detect the highly influential Pauli measurement or sample fewer Pauli bases to detect a more average Pauli influence. The proof of this statement is provided in~\cref{sec:ub-proof}.
\begin{algorithm}[h]
\caption{KrawtCheck}\label{alg:sparse}
    \begin{algorithmic}[1] % [1] adds line numbers
    \Require  Sample access to distribution $p$ that is either $(\Delta, k, \beta)$-correlated or uniform for $\Delta, \beta,\delta \in (0,1]$, $k \in [N]$.
    \Ensure{
     $\begin{cases} 
    \text{\texttt{YES}} & \text{w.p. } 1-\delta \quad \text{if } p \text{ is $(\Delta, \beta, k)$-correlated} \\
    \text{\texttt{NO}} & \text{w.p. } 1-\delta \quad \text{if } p = u 
    \end{cases}$
    }
    \State $M \gets 16 \ln{\frac{1}{\delta}}$ 
    \State $n \gets \frac{400\sqrt{2 \binom{N}{k}}}{\beta\Delta} $ \\
    \For{$i=1,2\ldots,M$}
    \State Sample $X_1, \ldots, X_n \sim p$ and $Y_1, \ldots, Y_n \sim p$ 
    \State $W_k \gets \frac{1}{n^2} \sum_{i=1}^n \sum_{j=1}^n \krawt{k}{d_{H}(X_i,Y_j)}$
    \State $T_i \gets \indic{W_i \geq \frac{\beta \Delta}{50}}$ 
    \EndFor \\
    \If{$\sum_{i=1}^M T_i \geq \frac{M}{2}$} 
    \State \Return $\mathtt{YES}$ 
    \Else 
    \State  \Return $\mathtt{NO}$
    \EndIf 
    \end{algorithmic}
\end{algorithm}
\paragraph{Detection based off weight.} Once an influential $P$ is sampled, the inner for-loop performs a weight-based check on each weight according to the influence factor $3^{w}$, which is embedded in the statistic $L_P$. We know that one of the weights must dominate $L_P$, resulting in a certain weight norm that dominates norm of the other weights by factor $\beta = \tildeOmega{3^{i-w}}$ (if the influence threshold was $3^i$). The check for the dominant weight is then performed with the \verb|KrawtCheck| subroutine, using the idea of correlation-concentrated testing shown by~\cref{thm:sparse-test} and proven in~\cref{sec:ub-proof}. Thus, the complexity of detection will scale accordingly by factor $3^{-i} \sqrt{9^w \binom{\nqubits}{w}} \cdot 3^i$, which is at most $\sqrt{10}^N$ at $w=9\nqubits/10$ (forgoing the $poly(N)$ factors that come from grouping $L_P$ by weight and union bound). We remark that $9\nqubits/10$ matches the minimum weight of the Pauli observables used for the lower bound construction.

We will now prove~\cref{thm:pauli-ub} by showing proving the correctness of~\cref{alg:rand_test} and showing that it uses $\tildeO{\frac{\sqrt{10}^\nqubits}{\eps^2}}$ copies.
\begin{proof}
We start out by showing that~\cref{alg:rand_test} uses $\tildeO{\frac{\sqrt{10}^N}{\eps^2}}$ copies. By~\cref{thm:sparse-test},
\begin{align*}
    n = \sum_{i=1}^N \bigO{3^{i} \cdot N} \sum_{w=l_i}^N \bigO{\frac{ \sqrt{ 9^w\binom{N}{w}} N^2}{3^i \Delta}} = \bigO{\frac{\sqrt{10}^N N^5}{\eps^2}} = \tildeO{\frac{\sqrt{10}^N}{\eps^2}}.
\end{align*}
    We proceed to show correctness under the null hypothesis. When $\sigma = \qmm$, the outcome distribution is uniform under any measurement, so the error probability comes from  \verb|KrawtCheck| incorrectly outputting $\mathtt{YES}$ when the given distribution is $r_P = u$. By union bound,
\begin{align*}
    \Pr_{\qmm}[T=\mathtt{Not\;Maximally-Mixed}] &\leq \sum_{i=1}^N 100 \cdot 3^i \cdot N^2 \cdot \delta_i  \leq \sum_{i=1}^N \frac{100 \cdot 3^i \cdot N^2}{100^{i+N}}  \\
    &\leq \sum_{i=1}^N 30^{-i} < \sum_{i=1}^{\infty} 30^{-i} <\frac{1}{4}.
\end{align*}
Now, we prove correctness when $\sigma \neq \qmm$. From \cref{pre:dist}, it can be seen that
\begin{align*}
L \eqdef \|\sigma-\qmm\|_2^2 = \sum_{Q \in \mathcal{Q} \setminus \{\eye_d\}} \alpha_Q^2 \geq \eps^2
\end{align*}
when $\|\sigma - \qmm\|_1 \geq \eps$. We will invoke \cref{lem:pauli_work_invest} to guarantee that a detectable basis measurement can be sampled with (reasonably) high probability. There exists $i^* \in [N]$ such that
\begin{align*}
    \Pr[L_P \geq 3^{i^*-2} L] \geq \frac{3^{-i^*}}{N}.
\end{align*}
Consider breaking up Pauli influence $L_P$ by weight
\begin{align*}
    L_P = \sum_{Q \triangleleft P} 3^{w(Q)} \alpha_Q^2 = \sum_{w=1}^N 3^{w} \sum_{S \in \binom{[N]}{w}} \alpha_{P}(S)^2 = \sum_{w=1}^N L_P^{(w)},
\end{align*}
where we define $L_P^{(w)} \eqdef 3^{w} \sum_{S \in \binom{[N]}{w}} \alpha_{P}(S)^2 = 3^{w} \|\alpha_{P}^{=w}\|_2^2$. If we take $w^* \eqdef \arg \max_{w \in [N]} L_P^{(w)}$, then
\begin{align*}
    \|\alpha_{P}^{=w^*}\|_2^2
    \geq \frac{3^{i^*-2} L}{3^{w^*} N} \geq \beta_{i^*}^{(w^*)} \eps^2, \quad
     \|\alpha_{P}^{=w^*}\|_2^2 \geq \frac{3^{i^*-2} L}{3^{w^*} N} \geq \beta_{i^*}^{(w^*)}\|\alpha_{P}\|_2^2
\end{align*}
for $L_P \geq 3^{i*-2} L$. We note that $w^* \geq l_{i^*}$. Otherwise, we obtain a contradiction $\|\alpha_{P}^{=w^*}\|_2^2 > \|\alpha_{P}\|_2^2$. Using~\cref{obs-pauli-fourier},
 \begin{align*}
     r_P(x) &= \frac{1}{2^N} + \frac{1}{2^N} \sum_{\emptyset \subset S \subseteq N} \parity{x} \alpha_{P}(S),
 \end{align*}
 we see that $r_P$ is $\left(\eps^2, w^*, \beta_{i^*}^{(w^*)}\right)$-correlated. The test could only fail if either (1) A desirable $P$ is not sampled in iteration $i^*$ or (2) \verb|KrawtCheck| returns $\mathtt{NO}$ on the outcome distribution of a desired $P$. We know that \verb|KrawtCheck| should return $\mathtt{NO}$ with probability $\leq \delta_{i^*}$ when $r_P$ is $\left(\eps^2, w^*, \beta_{i^*}^{(w^*)}\right)$-correlated. Thus, by union bound,
 \begin{align*}
     \Pr_{\sigma \neq \qmm}[T(x^\nsamp) &= \mathtt{Maximally-Mixed}]  \leq \Pr\left[\max_{P \in \mathcal{B}_{i^*}} L_P < 3^{i^* -2} L \right] + \delta_{i^*}. \\
 \end{align*}
 We notice that $Z \eqdef \sum_{P \in \mathcal{B}_{i^*}} \indic{L_P \geq 3^{i^*-2} L} \sim Bin(J, \gamma)$ where $J = 3^{i^*} \cdot N \cdot 100$ and $\gamma \geq \frac{3^{-i^*}}{N}$, resulting in 
 \begin{align*}
     \Pr\left[\max_{P \in \mathcal{B}_{i^*}} L_P < 3^{i^* -2} L \right] = (1-\gamma)^J \leq e^{-\gamma J} \leq e^{-100} < \frac{1}{8}.
 \end{align*}
 The error probability under $\sigma\neq\qmm$ is
 \begin{align*}
      \Pr_{\sigma \neq \qmm}[T(x^n) = \mathtt{Maximally-Mixed}] < \frac{1}{8} + \delta_{i^*} < \frac{1}{4}.
 \end{align*}
Thus, we have proven~\cref{thm:pauli-ub}.
\end{proof}
\section{The Lower bound} \label{sec:gen-bound}
The key idea behind our nearly-optimal lower bound is to derive bounds that depend on the measurement constraints. This allows us to obtain tighter guarantees under restricted measurement settings. Recall, the goal is to design a measurement scheme $\POVM^n= \POVM^{(1)} \otimes \POVM^{(2)} \otimes \ldots \otimes \POVM^{(\nsamp)}$ such that 
\begin{align} \label{def:cert-goal}
    \Pr_\qmm[T(x^n) = \mathtt{Maximally-Mixed}] \geq \frac{2}{3}, \quad
    \min_{\sigma: \tracenorm{\sigma-\qmm} \geq \eps}\Pr_{\sigma}[T(x^n) = \mathtt{Not \; Maximally-Mixed}] \geq \frac{2}{3},
\end{align}
where $x^n$ are the outcomes of the adaptive measurement scheme $\POVM^{(i)} = g_i(x^{i-1})$ for some randomized function $g_i: \mathcal{X}^{i-1} \rightarrow \povmset$. For single copy measurements, it suffices to consider the conditional distributions of the measurement outcomes $x_i | x^{i-1}$.

\subsection{Our Measurement-dependent Framework}
To establish the lower bound, we will be using the ``tried and true'' Lecam's two point method. Consider an ensemble of states $\mathcal{P} \subset \{\sigma: \tracenorm{\sigma -\qmm } \geq \eps\}$ where each state is drawn according to distribution $\nu$. 

In classical identity testing, Lecam's method states that a valid tester must be able to distinguish between the mixture of the ensemble of perturbed distributions and the null hypothesis~\cite{Paninski08}. Similarly, a valid quantum certification protocol must be able to distinguish $\qmm$ from $\expectDistrOf{\rho \sim \nu}{\rho}$.
\begin{lemma}[{\cite[Lemma 3.3]{Liu2024role}}]
Suppose nature selects $\rho = \qmm$ with probability $1/2$ and draws from the ensemble $\rho \sim \nu$ with probability $1/2$. Then, a measurement scheme that performs mixedness testing must satisfy
\begin{align} \label{eq:lecam}
        \frac{1}{2}\le \kldiv{\q^{\out^\ns}}{\p^{\out^\ns}}=\sum_{i=0}^{\ns-1}\expectDistrOf{\q^{\out^\ns}}{\kldiv{\q^{\out_{i+1}|\out^{i}}}{\p^{\out_{i+1}|\out^{i}}}},
\end{align}
where $\q^{\out^\nsamp}(x^n) = \expectDistrOf{\sigma \sim \nu}{\Pr[\forall_{i \in [\nsamp]} \; \POVM^{(i)}(\sigma) = x_i]}$ and $\p^{\out^\nsamp}(x^n) = \Pr[\forall_{i \in [\nsamp]} \; \POVM^{(i)}(\qmm) = x_i]$. The corresponding conditional distributions are denoted by $\p^{x_{i+1}|x_i}$ and $\q^{x_{i+1}|x_i}$.
\end{lemma}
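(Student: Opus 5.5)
The plan is Le Cam's two-point argument: reduce the testing guarantee~\eqref{def:cert-goal} to a lower bound on the total variation distance between the two outcome laws $\q^{\out^\ns}$ and $\p^{\out^\ns}$, turn that into a KL lower bound with Pinsker's inequality, and finish with the chain rule for KL divergence.

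\emph{Step 1 (testing implies total variation).} Suppose $T$ satisfies~\eqref{def:cert-goal}. Every $\rho$ in the support of $\nu$ satisfies $\tracenorm{\rho-\qmm}\ge\eps$, so averaging over $\rho\sim\nu$ gives
\begin{align*}
\q^{\out^\ns}(T=\mathtt{Not\;Maximally\text{-}Mixed})\ge 2/3, \qquad \p^{\out^\ns}(T=\mathtt{Maximally\text{-}Mixed})\ge 2/3 .
\end{align*}
This step uses that the adaptive scheme $g_i$ is the same whether nature picked $\qmm$ or $\rho\sim\nu$, so $\q^{\out^\ns}$ really is the $\nu$-mixture of the per-state outcome laws. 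Evaluating both laws on the event $\{T=\mathtt{Not\;Maximally\text{-}Mixed}\}$ then gives $d_{TV}(\q^{\out^\ns},\p^{\out^\ns})\ge 1/3$. If the scheme uses internal randomness $g_i$, I append those random seeds to $\out^\ns$. Both hypotheses draw the seeds from the same law, so this changes nothing below.

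\emph{Step 2 (from TV to KL).} Pinsker's inequality $d_{TV}\le\sqrt{\tfrac12 \kldiv{\q}{\p}}$ only gives $\kldiv{\q^{\out^\ns}}{\p^{\out^\ns}}\ge 2/9$ from Step 1, which is short of $1/2$. I expect this constant to be the main (if minor) obstacle. The first thing I would try is to boost the tester before applying Le Cam: run $O(1)$ independent copies of the protocol and take a majority vote. This is still an adaptive single-copy scheme over the same class $\povmset$, and it reaches success probability $1-\delta$ with $\delta=1/10$ at only a constant-factor cost in $\ns$. That cost is harmless because the lemma is used only up to constants. We then get $d_{TV}\ge 1-2\delta=4/5$, and Pinsker gives $\kldiv{\q^{\out^\ns}}{\p^{\out^\ns}}\ge 2(4/5)^2>1/2$. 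As a fallback, the Bretagnolle--Huber inequality $\q(A)+\p(A^c)\ge \tfrac12 e^{-\mathrm{KL}}$ with small error probabilities yields the same kind of constant directly.

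\emph{Step 3 (chain rule).} Expand both joint laws as products of their conditionals, $\q^{\out^\ns}=\prod_i \q^{\out_{i+1}\mid\out^i}$ and likewise for $\p$. The log-likelihood ratio then splits into a sum of conditional log-ratios. Taking expectation under $\q^{\out^\ns}$ gives the stated identity as the sum over $i$ of $\expectDistrOf{\q^{\out^\ns}}{\kldiv{\q^{\out_{i+1}|\out^i}}{\p^{\out_{i+1}|\out^i}}}$.

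Under $\p$, the conditional law of $\out_{i+1}$ is the outcome distribution of the POVM $g_{i+1}(\out^i)$ applied to $\qmm$. Under $\q$, it is the same POVM applied to the posterior-averaged state $\expectDistrOf{}{\rho\mid\out^i}$. This form is what the later per-step bounds will exploit. The only care needed is absolute continuity: under $\qmm$ every POVM outcome with $M_x\neq 0$ has positive probability, so all conditional KL terms are finite.
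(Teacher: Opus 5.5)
Your route (Le Cam's reduction to total variation, then Pinsker, then the chain rule) is the standard template behind the cited lemma; the paper gives no proof of its own and only cites it. Steps~1 and~3 are essentially fine. The gap is Step~2. From \eqref{def:cert-goal} you get only $d_{\mathrm{TV}}(\q^{\out^\ns},\p^{\out^\ns})\ge 1/3$, and hence $\kldiv{\q^{\out^\ns}}{\p^{\out^\ns}}\ge 2/9$. No argument that passes only through the decision $T$ can do better than $\tfrac13\ln 2\approx 0.23$, because that is the divergence between the decision laws $(\tfrac13,\tfrac23)$ and $(\tfrac23,\tfrac13)$.

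Your boosting does not repair this. Majority voting over $k$ runs is a \emph{different} protocol with $k\ns$ copies, so Pinsker lower-bounds the divergence of \emph{its} outcome laws, not the $\ns$-copy laws $\q^{\out^\ns},\p^{\out^\ns}$ the lemma is about. You also cannot transfer the bound back. Under $\q$ the $k$ runs are independent only conditionally on the state drawn from $\nu$. Convexity therefore bounds the $k$-run divergence only by $k$ times the $\nu$-average of the single-state divergences, which can be far larger than the divergence of the mixture. For the paper's symmetric ensemble and $\ns=1$, the mixture is exactly $\qmm$, so one run has divergence $0$ while two runs already have positive divergence. The Bretagnolle--Huber fallback has the same defect: with both error probabilities equal to $1/3$ it reads $\tfrac23\ge\tfrac12 e^{-\mathrm{KL}}$, which is vacuous, so it also needs boosting first.

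What you have actually shown is weaker. Every valid tester can be converted, at a constant-factor cost in copies, into an adaptive single-copy protocol over the same $\povmset$ whose outcome laws satisfy \eqref{eq:lecam}. The upper-bound half of the framework applies to any such protocol, so this suffices for Theorem~\ref{thm:sq-lb}, but it is not the lemma as stated. The clean fix is to adjust the constant, not the protocol. The $1/2$ is exactly what Pinsker gives when each error probability is at most $1/4$: then $d_{\mathrm{TV}}\ge 1/2$ and $\mathrm{KL}\ge 2\cdot(1/2)^2=1/2$, and the paper's own upper-bound analysis does achieve errors below $1/4$. Under the $2/3$ guarantee of \eqref{def:cert-goal} the correct constant is $2/9$. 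Either version changes only constants downstream.

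A smaller point concerns the seeds. Appending the internal seeds lower-bounds the divergence of the pair (seeds, outcomes), which dominates $\kldiv{\q^{\out^\ns}}{\p^{\out^\ns}}$. That is the wrong direction for the literal statement. It is also unnecessary if each $\out_i$ records which POVM was applied. The seeds are then conditionally independent of the hypothesis given $\out^\ns$, so $T$ is a randomized post-processing of $\out^\ns$, and data processing gives the total variation bound for $\out^\ns$ directly.
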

\paragraph{Hard instance construction.}
To provide a lower bound on $\nsamp$, it suffices to bound the expected Kl-divergence between the outcome distributions of the maximally mixed state and the mixture of ensembles in terms of $\nsamp$ and the measurement constraints. To minimize the KL-divergence, we need to pick an ensemble of hard instance states that are the least distinguishable from $\qmm$ under the measurement.

We consider the following general hard-instance construction used in recent works of tomography and certification ~\cite{acharya2025single, ADLY2025Paulinot, Liu2024role}.
\begin{definition}[Hard Instance Construction] \label{def:hard-instance}
    Let $d^{3/2} \leq \ell \leq d^2-1$ and $\{V_1,V_2, \ldots, V_{d^2-1}, V_{d^2 } = \frac{\eye_d}{d}\}$ be an orthonormal basis of $\D(\mathbb{C}^d)$. Let $z \sim \nu = Rad(\frac{1}{2})^{\otimes \ell}$, then the ensemble of states $\sigma_z$ is defined as
   \begin{align*}
       \sigma_z \eqdef \qmm + \barDelta_z, \quad \Delta_z \eqdef \frac{c\eps}{\sqrt{d}} \cdot \frac{1}{\sqrt{\ell}} \sum_{i=1}^\ell z_m V_m, \quad \barDelta_z \eqdef \Delta_z \min\left\{1, \frac{1}{2d \opnorm{\Delta_z}}\right\}.
   \end{align*} 
\end{definition}
One can think of this construction as the quantum analogue to the Paninski construction~\cite{Paninski08}. We perturb $\qmm$ along uniformly random orthogonal directions on a $\eps$-sized hypercube. $\Delta_z$ incurs normalization to ensure positive semi-definiteness of each state. We provide a useful fact regarding the normalization factor.
\begin{theorem}[{\cite[Theorem 5.2]{acharya2025single}}]
Consider the construction in~\cref{def:hard-instance} when $c \leq \frac{1}{200}$. Let $\mathcal{Z} \eqdef \{z | z \in \{-1,1\}^\ell, \Delta_z \neq \barDelta_z\}$. Then,
\begin{align*}
   \Pr[z \in \mathcal{Z}] < d\exp(-\ell^{\frac{1}{4}}).
\end{align*}
\end{theorem}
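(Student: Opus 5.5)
The plan is to treat $\Delta_z$ as a Rademacher matrix series and apply a matrix concentration inequality (non-commutative Khintchine, or matrix Bernstein/Hoeffding). Write $A_m \eqdef \frac{c\eps}{\sqrt{d\ell}} V_m$, so that $\Delta_z = \sum_{m=1}^{\ell} z_m A_m$ with $z_m$ i.i.d.\ Rademacher and $A_m$ Hermitian. Note that $z \in \mathcal{Z}$ exactly when $\opnorm{\Delta_z} > \frac{1}{2d}$. Matrix Hoeffding for Rademacher series gives
\begin{align*}
\Pr\left[\opnorm{\Delta_z} \geq t\right] \leq 2d \exp\left(-\frac{t^2}{2\varsigma^2}\right), \qquad \varsigma^2 \eqdef \opnorm{\sum_{m=1}^{\ell} A_m^2}.
\end{align*}
The work therefore splits into two steps: bounding the variance proxy $\varsigma^2$, and then substituting $t = \frac{1}{2d}$.

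\textbf{Variance proxy.} I would use that $\{V_m\}_{m=1}^{d^2}$ is a Hilbert--Schmidt orthonormal basis. This gives the completeness identity $\sum_{m=1}^{d^2} V_m X V_m^\dagger = \Tr(X)\,\eye_d$, and hence $\sum_{m=1}^{d^2} V_m^2 = d\,\eye_d$. Each $V_m^2$ is PSD, so any partial sum satisfies
\begin{align*}
\sum_{m\le \ell} V_m^2 \preceq d\,\eye_d,
\end{align*}
which yields $\varsigma^2 \leq \frac{c^2\eps^2}{\ell}$. For structured bases such as normalized Paulis, where $V_m^2 = \eye_d/d$, the bound improves to $\varsigma^2 = \frac{c^2\eps^2}{d^2}$. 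I would first try to use the structure of the specific basis, or the hypothesis $\ell \ge d^{3/2}$, to obtain a proxy of the form $\varsigma^2 \lesssim c^2\eps^2 /(d^2 \ell^{1/4})$. With that proxy, the exponent $t^2/(2\varsigma^2)$ at $t = 1/(2d)$ becomes $\gtrsim \ell^{1/4}/(8c^2\eps^2)$, which exceeds $\ell^{1/4}$ because $c \le 1/200$ and $\eps<1$. The prefactor $2d$ then needs to be absorbed, for example by using a slightly larger exponent.

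\textbf{Main obstacle.} The difficulty is matching the normalization threshold $\frac{1}{2d}$ against the size of the variance proxy. A naive proxy such as $c^2\eps^2/\ell$ or $c^2\eps^2/d^2$ gives only an exponent of order $\ell/d^2$ or of order a constant, not $\ell^{1/4}$. To close this gap, I would replace the one-shot Hoeffding bound with a high-moment trace method. I would bound $\mathbb{E}\Tr(\Delta_z^{2p})$ with $p \asymp \ell^{1/4}$ by expanding into words in the $V_m$ and counting pairings, which is where the orthonormality and the condition $\ell \ge d^{3/2}$ would be used to control the non-crossing terms. Markov's inequality on $\Tr(\Delta_z^{2p}) \ge \opnorm{\Delta_z}^{2p}$ would then give the factor $d$ from the trace together with a decay $e^{-\ell^{1/4}}$.

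If even this fails for general bases, I would check whether the cited statement implicitly assumes the operator-norm bound $\opnorm{V_m}\lesssim 1/\sqrt d$, as for normalized Paulis. I would then specialize the argument to that case, since that is the case used in the lower bound.
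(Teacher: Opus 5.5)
The paper gives no proof of this statement (it only cites it), so I am judging your plan on its own terms. It has a genuine gap. Your plan does not yet contain a proof, and the one route that could produce an $\ell^{1/4}$ exponent cannot be closed with the ingredients you name.

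\textbf{The plan as stated cannot work.} First, the target $\varsigma^2\lesssim c^2\eps^2/(d^2\ell^{1/4})$ is impossible for every basis. Indeed $\opnorm{\sum_m A_m^2}\ge \frac{1}{d}\tr\sum_m A_m^2=\frac{c^2\eps^2}{d^2}$, so matrix Hoeffding is capped at $2d\exp(-1/(8c^2\eps^2))$, an exponent independent of $\ell$. Second, the trace-method estimate carries the entire argument but is only announced. Orthonormality together with $\ell\ge d^{3/2}$ cannot deliver it, because for an arbitrary orthonormal basis (which the hard-instance definition allows) the claimed bound is false. Here is a counterexample:
\begin{itemize}
\item Take $k=\lceil d^{3/4}\rceil+1$, and let $V_1,\ldots,V_\ell$, with $\ell=k^2-1\ge d^{3/2}$, be an orthonormal basis of the traceless Hermitian matrices supported on a $k\times k$ principal block. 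These are orthogonal to $\eye_d$, so they extend to a basis as required.
\item Then $W_z=\sum_m z_mV_m$ has rank at most $k$ and Hilbert--Schmidt norm $\sqrt{\ell}$.
\item So for every $z$, $\opnorm{\Delta_z}\ge \frac{c\eps}{\sqrt{d\ell}}\sqrt{\ell/k}=\frac{c\eps}{\sqrt{dk}}$. This exceeds $\frac{1}{2d}$ once $k<4c^2\eps^2 d$, i.e.\ for all large $d$.
\item Hence $\Pr[z\in\mathcal{Z}]=1$.
\end{itemize}
Basis structure must already enter at the level of your non-crossing terms. By convexity, $\mathbb{E}\tr W_z^{2p}\ge \tr((\sum_m V_m^2)^p)$. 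So the Markov step can only give a nontrivial bound if $\opnorm{\sum_{m\le\ell}V_m^2}<\ell/(4c^2\eps^2 d)$. This holds for normalized Paulis ($V_m^2=\eye_d/d$) but fails in general. Even for Paulis, the crude count is not enough. Bounding each paired word by $|\tr(Q_{m_1}\cdots Q_{m_{2p}})|\le d$ gives only $\Pr[z\in\mathcal{Z}]\le d\,(8pc^2\eps^2)^p$, which is useless once $p\gtrsim(c\eps)^{-2}$. Reaching $p\asymp\ell^{1/4}$ would require exploiting cancellations from commutation signs in crossing pairings, which the proposal does not attempt.

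\textbf{The missing ingredient.} What you need, and what explains both the hypothesis $\ell\ge d^{3/2}$ and the exponent $\ell^{1/4}$, is scalar concentration of the convex map $z\mapsto\opnorm{\Delta_z}$. Orthonormality gives $\opnorm{\sum_m u_mV_m}\le\|\sum_m u_mV_m\|_2=\|u\|_2$, so this map is $\frac{c\eps}{\sqrt{d\ell}}$-Lipschitz. Talagrand's inequality then yields $\Pr[\opnorm{\Delta_z}\ge \mathrm{med}+\frac{1}{4d}]\le 4\exp(-\frac{\ell}{256c^2\eps^2 d})$, with $\mathrm{med}$ the median of $\opnorm{\Delta_z}$. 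Here $\ell/d\ge\sqrt{d}\ge\ell^{1/4}$ precisely because $d^{3/2}\le\ell\le d^2$.

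What remains is a basis-dependent median bound $\mathrm{med}\le\frac{1}{4d}$. In the Pauli case this reads $\mathbb{E}\opnorm{\sum_m z_mQ_m}\lesssim\sqrt{\ell}/(c\eps)$. Matrix Khintchine gives $\sqrt{2\ell\ln(2d)}$, which suffices while $\ln(2d)\lesssim(c\eps)^{-2}$. For larger $d$ you need a bound without the $\sqrt{\log d}$ loss, for example the free-probability bounds of Bandeira, Boedihardjo and van Handel, using $\sigma_*^2\le d$ against $\sigma^2=\ell$.

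Your fallback of assuming $\opnorm{V_m}\lesssim 1/\sqrt{d}$ is the right kind of extra hypothesis; it covers the normalized Pauli basis actually used in the lower bound. But the proposal still gives no argument for that case.
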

Essentially, the normalization factor is negligible as incurring the ``bad states'' happens with very low probability. 

\paragraph{Bridging the gap with average mutual information.}
 We prove a key result that reduces mixedness testing to  the problem communicating $\ell$ bits over a measurement-induced noisy channel.
 
\begin{lemma}[Distinguishability from Maximally Mixed State]
Given the state ensemble defined in~\cref{def:hard-instance},
    \begin{align*}
      \expectDistrOf{\q^{\out^\ns}}{\kldiv{\q^{\out_{i+1}|\out^{i}}}{\p^{\out_{i+1}|\out^{i}}}} \leq \frac{ c^2 \ln 4 \; \eps^2}{\ell} \sup_{\POVM \in \povmset} \opnorm{\POVM} \sum_{m=1}^\ell \mutualinfo{z_m}{x^i}
    \end{align*}
    for $i=1,2,\ldots,\nsamp$.
\end{lemma}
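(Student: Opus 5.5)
We bound each conditional KL term by a chi-square divergence and then linearize in the perturbation. Fix $i$ and a history $x^i$, and let $\POVM=\{M_x\}$ be the measurement chosen for copy $i+1$. Under $\qmm$ the outcome law is $\p(x)=\Tr[M_x]/d$. Under the mixture it is
\begin{align*}
\q(x)=\expectDistrOf{z\mid x^i}{\Tr[M_x\sigma_z]}=\frac{\Tr[M_x]}{d}+\Tr\!\left[M_x\,\expectDistrOf{z\mid x^i}{\barDelta_z}\right].
\end{align*}
For now we ignore the truncation, so $\barDelta_z=\Delta_z$; by \cite[Theorem 5.2]{acharya2025single} the bad set $\mathcal{Z}$ has probability $d\exp(-\ell^{1/4})$, and its effect can be handled at the end as a negligible additive term. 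The first step uses $\kldiv{\q}{\p}\le\chi^2(\q\|\p)$, which gives
\begin{align*}
\kldiv{\q}{\p}\le \sum_x \frac{d\,\Tr[M_x\bar\Delta]^2}{\Tr[M_x]}=d\,\Tr\!\left[\bar\Delta\,\Luders_\POVM(\bar\Delta)\right],\qquad \bar\Delta\eqdef\frac{c\eps}{\sqrt{d\ell}}\sum_{m=1}^\ell \mu_m V_m,
\end{align*}
where $\mu_m\eqdef\expectDistrOf{}{z_m\mid x^i}$ is the posterior mean of bit $m$. This identity follows directly from \cref{def:mic}.

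Expanding the quadratic form gives $d\,\Tr[\bar\Delta\Luders_\POVM(\bar\Delta)]=\frac{c^2\eps^2}{\ell}\,\mu^\top H\mu$, where $H_{mm'}=\dotprodH{V_m}{\Luders_\POVM(V_{m'})}$. The map $\Luders_\POVM$ is self-adjoint and positive semidefinite with respect to $\dotprodH{\cdot}{\cdot}$, since $\dotprodH{A}{\Luders_\POVM(A)}=\sum_x \Tr[M_xA]^2/\Tr[M_x]\ge0$. Because the $V_m$ are orthonormal, $H$ is a compression of $\Luders_\POVM$ and so $\mu^\top H\mu\le\opnorm{\POVM}\|\mu\|_2^2$. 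Here I read $\opnorm{\POVM}$ as the operator norm of $\Luders_\POVM$ restricted to the traceless subspace, and take the supremum over $\POVM\in\povmset$ to remove the dependence on the adaptively chosen measurement.

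It remains to relate $\expectDistrOf{x^i}{\mu_m^2}$ to $\mutualinfo{z_m}{x^i}$. Since $z_m$ is a uniform bit, its posterior given $x^i$ is $\mathrm{Bern}\big(\tfrac{1+\mu_m}{2}\big)$. Then $\mutualinfo{z_m}{x^i}=\expectDistrOf{}{\ln 2-h(\tfrac{1+\mu_m}{2})}$ in nats, and the elementary inequality $\ln2-h(\tfrac{1+t}{2})\ge t^2/2$ holds on $[-1,1]$. I expect this step to be where the constant $\ln 4$ is set: we want $\mu_m^2\le \ln 4\cdot(\text{information in bits})$. Indeed $t^2\le 2(\ln 2-h)=\ln4\cdot(1-h_2)$, where $h_2$ is the binary entropy measured in bits, which gives $\expectDistrOf{}{\mu_m^2}\le\ln 4\cdot\mutualinfo{z_m}{x^i}$ with mutual information in bits. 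Summing over $m$ and averaging over $x^i\sim\q$ gives the claimed bound.

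The main obstacle is being careful that $H$ is controlled by $\opnorm{\POVM}$ in the intended sense, and that truncated states do not spoil linearity. For the truncation, I would split the expectation over $z$ into $z\notin\mathcal{Z}$, where the above argument applies exactly, and $z\in\mathcal{Z}$, where $\|\barDelta_z\|$ is still bounded by $1/(2d)$ in operator norm. The second contribution is exponentially small, so it can be absorbed because $\ell\ge d^{3/2}$.
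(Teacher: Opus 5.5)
Your main chain is the paper's proof: $\mathrm{KL}\le\chi^2$, the identity $\chi^2=d\,\langle A,\Luders_\POVM(A)\rangle$ for the posterior-mean perturbation $A$, the operator-norm bound on $\Luders_\POVM$, orthonormality of the $V_m$ to reduce to $\frac{c^2\eps^2}{\ell d}\sum_m \mu_m^2$, and $\mathbb{E}[\mu_m^2]\le \ln 4\cdot I(z_m;x^i)$. The only difference in that chain is that you prove the last inequality yourself (via $\ln 2-h(\frac{1+t}{2})\ge t^2/2$, with $I$ in bits) where the paper cites it; your derivation, including the constant $\ln 4$, is correct.

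The real divergence is the truncation, and there your sketch proves a slightly weaker inequality than the one stated. The lemma has no error term, and its right-hand side can be arbitrarily small. So an additive $O(c^2\eps^2\Pr[z\in\mathcal{Z}])$ contribution cannot be ``absorbed'' into it. The condition $\ell\ge d^{3/2}$ only makes such a term negligible in the final inequality for $n$, not in this lemma.

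The split also does not work additively as you describe it. First, $\chi^2$ is quadratic in $\mathbb{E}[\barDelta_Z\mid x^i]$. Second, $\mathbb{E}[\Delta_Z\mathbf{1}\{Z\notin\mathcal{Z}\}\mid x^i]$ is not $\frac{c\eps}{\sqrt{d\ell}}\sum_m \mu_m V_m$, so the ``good part'' is not what your argument bounds. A clean version proceeds as follows:
\begin{itemize}
\item Write $\mathbb{E}[\barDelta_Z\mid x^i]=a+b$ with $a=\mathbb{E}[\Delta_Z\mid x^i]$ and $b=\mathbb{E}[(\barDelta_Z-\Delta_Z)\mathbf{1}\{Z\in\mathcal{Z}\}\mid x^i]$.
\item Use $\|b\|_2\le \frac{c\eps}{\sqrt d}\Pr[Z\in\mathcal{Z}\mid x^i]$, which holds since $\|\Delta_z\|_2=c\eps/\sqrt d$.
\item Use $(u+v)^2\le(1+\eta)u^2+(1+\eta^{-1})v^2$ for the seminorm induced by $\Luders_\POVM$.
\end{itemize}
Averaging over $x^i$ then gives $(1+\eta)$ times the stated bound plus $(1+\eta^{-1})c^2\eps^2\sup_{\POVM}\opnorm{\POVM}\Pr[z\in\mathcal{Z}]$.

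The paper gets the clean form differently: it replaces $\mathbb{E}[\barDelta_Z\mid x^i]$ by $\mathbb{E}[\Delta_Z\mid x^i]$ inside each term of the $\chi^2$ sum. That replacement is not automatic either, because the shrinkage factor depends on $z$. For example, suppose the posterior splits its mass equally between two configurations with opposite values of $\tr(M_x\Delta_z)$ but different shrinkage factors. Then $\tr(M_x\,\mathbb{E}[\Delta_Z\mid x^i])=0$ while $\tr(M_x\,\mathbb{E}[\barDelta_Z\mid x^i])\neq 0$.

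So your explicit treatment of truncation is the more defensible one. You should just state what it actually yields: the lemma up to a $(1+\eta)$ factor and an exponentially small additive term. That suffices for the final lower bound, which already carries a term of this type.
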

\begin{proof}
    We start by bounding the KL-divergence with the $\chi^2$-divergence to obtain a better dependence on the measurement constraints. 
    \begin{align}
        \expectDistrOf{\q^{\out^\ns}}{\kldiv{\q^{\out_{i+1}|\out^{i}}}{\p^{\out_{i+1}|\out^{i}}}} &\leq  \expectDistrOf{\q^{\out^\ns}}{\chisquare{\q^{\out_{i+1}|\out^{i}}}{\p^{\out_{i+1}|\out^{i}}}} \\
        &= \expectDistrOf{\q^{\out^\ns}}{\expectDistrOf{X \sim \p^{x_{i+1}|x^i}}{\beta_i(X)^2}}  \label{eq:chi-square},
    \end{align}
    where $\beta_i(x) = \frac{q^{x_{i+1}|x^i} (x) - p^{x_{i+1}|x^i}(X)}{p^{x_{i+1}|x^i}(x)}$. We use Born's rule to obtain the following expression for $\beta_i(x)^2$,
    \begin{align*}
        \beta_i(x)^2 &= \left(\frac{\expectDistrOf{Z|x^i}{ \Tr(M^{(i)}_x\sigma_Z) - \Tr(M^{(i)}_x \qmm)}}{\Tr\left(M^{(i)}_x \qmm\right)}\right)^2 =  \frac{\Tr\left(M^{(i)}_x \expectDistrOf{Z|x^i}{ \barDelta_Z}\right)^2}{\Tr\left(M^{(i)}_x \qmm\right)^2}\\
        &=  \frac{\dotprodH{M^{(i)}_x}{\expectDistrOf{Z|x_i}{\barDelta_Z}}^2}{\Tr\left(M^{(i)}_x \qmm\right)^2},
    \end{align*}
    where the last line used linearity of the expectation and inner product. We will now unravel the expectation over the $\beta_i(X)$'s,
    \begin{align*}
        \expectDistrOf{X \sim \p^{x_{i+1}|x^i}}{\beta_i(X)^2} &= \sum_{x \in \mathcal{X}} \frac{\dotprodH{M^{(i)}_x}{\expectDistrOf{Z|x_i}{\barDelta_Z}}^2}{\Tr\left(M^{(i)}_x \qmm\right)} = d \sum_{x \in \mathcal{X}} \frac{\dotprodH{M^{(i)}_x}{\expectDistrOf{Z|x_i}{\barDelta_Z}}^2}{\Tr\left(M^{(i)}_x\right)} \\
        &\leq d \sum_{x \in \mathcal{X}} \frac{\dotprodH{M^{(i)}_x}{\expectDistrOf{Z|x_i}{\Delta_Z}}^2}{\Tr\left(M^{(i)}_x\right)}= d \dotprodH{\Luders_{\POVM^{(i)}}(\expectDistrOf{Z|x^{i}}{\Delta_Z})}{\expectDistrOf{Z|x^{i}}{\Delta_Z}} \\
        &\leq d \sup_{\POVM \in \povmset} \opnorm{\POVM} \dotprodH{\expectDistrOf{Z|x^{i}}{\Delta_Z}}{\expectDistrOf{Z|x^{i}}{\Delta_Z}}.
    \end{align*}
    It now suffices to express the expected norm of the perturbation.
    \begin{align*}
        \dotprodH{\expectDistrOf{Z|x^{i}}{\Delta_Z}}{\expectDistrOf{Z|x^{i}}{\Delta_Z}} &= \frac{c^2 \eps^2}{\ell d}\sum_{m,l \in [\ell]} \expectDistrOf{z_m | x^{i}}{z_m} \expectDistrOf{z_l | x^{i}}{z_l} \dotprodH{V_m}{V_l} \\
        &= \frac{c^2 \eps^2}{\ell d} \sum_{m,l \in [\ell]} \expectDistrOf{z_m | x^{i}}{z_m} \expectDistrOf{z_l | x^{i}}{z_l} \delta_{ml} \\
        &= \frac{c^2 \eps^2}{\ell d} \sum_{m=1}^\ell \expectDistrOf{z_m | x^{i}}{z_m}^2.
    \end{align*}
    We invoke a useful lemma to characterize the expectation of the random variable $\expectDistrOf{z_m | x^{i}}{z_m}^2$ when $z_m$ is restricted to a binary random variable.  
    \begin{lemma}[{\cite[Lemma 6]{ACLST22iiuic}}]
    \label{lemma:MI-channel-lower}
    Let  $Z\in\{-1, 1\}^\ell$ be drawn uniformly and $Y$ be a random variable. Then for each $i\in[\ell]$,
    \[
     \expect{\expectCond{Z_i}{Y}^2} \leq \ln 4 \cdot \mutualinfo{Z_i}{Y}.
    \]
    \end{lemma}
    \noindent Substituting what we have into~\cref{eq:chi-square},
    \begin{align*}
        \expectDistrOf{\q^{\out^\ns}}{\kldiv{\q^{\out_{i+1}|\out^{i}}}{\p^{\out_{i+1}|\out^{i}}}} &\leq \frac{c^2 \eps^2}{\ell} \sup_{\POVM \in \povmset} \opnorm{\POVM} \sum_{m =1}^\ell \expectDistrOf{\q^{\out^\ns}}{\expectDistrOf{z_m | x^{i}}{z_m}^2} \\
        &\leq \frac{c^2 \ln 4 \; \eps^2}{\ell}  \sup_{\POVM \in \povmset} \opnorm{\POVM} \sum_{m =1}^\ell \mutualinfo{z_m}{x^i}
    \end{align*}
\end{proof}
    \noindent We will now characterize the average mutual information in terms of the measurement constraints to conclude our lower bound framework for adaptive single-copy state certification.
    \begin{theorem}[{\cite[Theorem 4.4]{ADLY2025Paulinot}}]
    \label{thm:avg-MI-upper}
    Let $x^i$ be the $x_1, x_2,\ldots,x_i$ marginal of the outcome distribution of performing $\POVM^{(1)} \otimes \POVM^{(2)} \otimes \ldots \otimes \POVM^{(n)}$ on the random state $\sigma_z^{\otimes n}$. Then, for all $i\in[\ns]$,
    \begin{align}
        \frac{1}{\ell} \sum_{m=1}^\ell \mutualinfo{z_m}{x^i} \leq \frac{8ic^2 \eps^2}{\ell^2} \sup_{\POVM \in \povmset} \sum_{m=1}^\ell \dotprodH{\Luders_\POVM(V_m)}{V_m} + 16ic^2\eps^2 \Pr[z \in \mathcal{Z}].
    \end{align}
    \end{theorem}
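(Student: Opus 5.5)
The plan is to bound $I(z_m;x^i)$ through the chain rule and then control each conditional increment by a $\chi^2$-type quantity that is linear in $\Luders_\POVM$. First, by the chain rule,
\begin{align*}
\mutualinfo{z_m}{x^i}=\sum_{j=0}^{i-1} \mutualinfo{z_m}{x_{j+1}\mid x^{j}}.
\end{align*}
The claimed bound is linear in $i$, so it suffices to show that each increment, averaged over $m$, is at most $\frac{8c^2\eps^2}{\ell^2}\sup_{\POVM}\sum_m \dotprodH{\Luders_\POVM(V_m)}{V_m}$, up to the bad-event term. Conditioned on $x^j$, the measurement $\POVM^{(j+1)}=\{M_x\}$ is fixed, because adaptivity only enters through $x^j$. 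I would then bound the conditional mutual information by a $\chi^2$-divergence against a fixed reference distribution, namely the uniform-like outcome law $p_0(x)=\Tr(M_x)/d$ under $\qmm$. This uses the variational identity $I(A;B\mid C)\le \mathbb{E}\,\chisquare{p^{B\mid A,C}}{r^{B\mid C}}$, which holds for any reference $r$.

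Second, for a fixed realization $z$ with $z\notin\mathcal{Z}$, we have $\barDelta_z=\Delta_z$, so the outcome law is $p_0(x)+\Tr(M_x\Delta_z)$. The $\chi^2$ term is then
\begin{align*}
d\sum_x \frac{\Tr(M_x\Delta_z)^2}{\Tr(M_x)} = d\,\dotprodH{\Luders_\POVM(\Delta_z)}{\Delta_z},
\end{align*}
which is not yet sensitive to individual $z_m$. To isolate the dependence on $z_m$, I would instead use the reference obtained by averaging out $z_m$ alone, i.e.\ compare $z_m=+1$ with $z_m=-1$ holding the other coordinates fixed. The two conditional outcome laws then differ by $\frac{2c\eps}{\sqrt{d\ell}}\Tr(M_xV_m)$. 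Bounding the mutual information of a uniform bit by the $\chi^2$ between the two hypotheses and their mixture, with the mixture's denominator compared to $p_0$, gives a per-$m$ contribution of order
\begin{align*}
\frac{c^2\eps^2}{d\ell}\cdot d\sum_x\frac{\Tr(M_xV_m)^2}{\Tr(M_x)}=\frac{c^2\eps^2}{\ell}\dotprodH{\Luders_\POVM(V_m)}{V_m}.
\end{align*}
The denominator is replaced by $\Tr(M_x)/d$ up to a factor $2$, since $\opnorm{\Delta_z}\le 1/(2d)$ forces $p_z(x)\ge \Tr(M_x)/(2d)$. Summing over $m$, dividing by $\ell$, and supremizing over $\POVM\in\povmset$ yields the first term with a constant like $8$. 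The extra $1/\ell$ must come from the $\frac{1}{\sqrt{\ell}}$ normalization squared, consistent with the target $\frac{1}{\ell^2}$ after averaging.

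Third, for the bad event $z\in\mathcal{Z}$, each increment can be crudely bounded. The perturbed outcome probabilities still satisfy $p(x)\ge \Tr(M_x)/(2d)$, and $\opnorm{\barDelta}\le 1/(2d)$, so the per-step $\chi^2$ contribution is bounded by an absolute quantity of order $c^2\eps^2$. Weighting by $\Pr[z\in\mathcal{Z}]$ produces the term $16ic^2\eps^2\Pr[z\in\mathcal{Z}]$ after summing over the $i$ steps.

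The main obstacle is the second step: making the per-coordinate $\chi^2$ comparison rigorous when the other coordinates $z_{-m}$ are random and correlated with $x^j$ through past outcomes. The conditional law of $z_{-m}$ given $(x^j, z_m)$ depends on $z_m$. I would first try conditioning on $z_{-m}$ as well, using $I(z_m;x_{j+1}\mid x^j)\le I(z_m;x_{j+1}\mid x^j,z_{-m})$. This inequality holds when $z_m$ is independent of $z_{-m}$ given $x^j$ is not required; it follows from $x_{j+1}-(z,x^j)$ structure plus independence of the prior. If that inequality fails in the adaptive setting, the fallback is to follow the Assouad-style coordinate-flipping argument of \cite{ADLY2025Paulinot}, on which this theorem is based.
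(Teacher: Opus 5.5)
Your architecture is the right one: compare $z$ with its $m$-th coordinate flip $z^{(m)}$, use $\opnorm{\barDelta_z}\le 1/(2d)$ to get $p_z(x)\ge \tr(M_x)/(2d)$, and sum over $m$ at a measurement fixed by $x^j$. It reproduces the constants of the result, which the paper imports from \cite{ADLY2025Paulinot} without proof. Bounding each increment by $\chi^2(p_z\,\|\,p_{z^{(m)}})$ gives exactly $\frac{8c^2\eps^2}{\ell}\langle\Luders_\POVM(V_m),V_m\rangle$.

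\textbf{The gap.} The genuine gap is the step you flag yourself. The per-step inequality $I(z_m;x_{j+1}\mid x^j)\le I(z_m;x_{j+1}\mid x^j,z_{-m})$ is false under adaptivity. The two sides differ by $I(z_m;z_{-m}\mid x^j)-I(z_m;z_{-m}\mid x^{j+1})$, which has no sign.

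For example, suppose the first outcome depends on $z$ only through $z_1+z_2$ and the second only through $z_2$. Given $x_1$, the posterior is $\propto 1+a(z_1+z_2)$, so $z_1$ and $z_2$ are dependent and $x_2$ is informative about $z_1$. Yet $I(z_1;x_2\mid x_1,z_2)=0$. Your justification (the Markov structure plus independence of the prior) does not yield the per-step statement.

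Independence of the prior gives only the global version, and that is all you need. Condition on $z_{-m}$ \emph{before} applying the chain rule:
\[
I(z_m;x^i)\le I(z_m;x^i,z_{-m})=I(z_m;x^i\mid z_{-m})=\sum_{j=0}^{i-1} I(z_m;x_{j+1}\mid x^j,z_{-m}),
\]
using $I(z_m;z_{-m})=0$. Given $(x^j,z_{-m})$, the measurement is fixed and only $z_m$ is unknown. Its posterior is not uniform, but that is harmless: by convexity of KL in its second argument, $I(z_m;x_{j+1}\mid x^j,z_{-m})\le \mathbb{E}[\chi^2(p_z\,\|\,p_{z^{(m)}})\mid x^j,z_{-m}]$. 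Your mixture reference also works and gives the constant $2$ in place of $8$.

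\textbf{Bad-event step.} Two smaller corrections are needed here.

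First, the operator-norm bound alone only gives an $O(1)$ increment, since $\sum_x (\tr M_x/d)^2/(\tr M_x/(2d))=2$. To get order $c^2\eps^2$, set $D=\barDelta_z-\barDelta_{z^{(m)}}$ and use three facts:
\begin{itemize}
\item $\|D\|_2\le 2\|\Delta_z\|_2=2c\eps/\sqrt d$;
\item $\langle\Luders_\POVM(D),D\rangle\le\|D\|_2^2$, which follows from Cauchy--Schwarz in the form $\tr(M_xD)^2\le\tr(M_x)\tr(DM_xD)$;
\item the denominator bound $p_{z^{(m)}}(x)\ge\tr(M_x)/(2d)$.
\end{itemize}
Together these give $\chi^2\le 2d\|D\|_2^2\le 8c^2\eps^2$.

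Second, the good event must be that \emph{both} $z$ and $z^{(m)}$ avoid $\mathcal Z$, not just $z$. The marginal of $z$ under the joint law is still the prior, so the complement has probability at most $2\Pr[z\in\mathcal Z]$. This factor of $2$ is what produces the term $16ic^2\eps^2\Pr[z\in\mathcal Z]$.
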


    \noindent Putting it all together, we have
    \begin{align} \label{eq:mdep-lb}
        \frac{1}{2}\le \kldiv{\q^{\out^\ns}}{\p^{\out^\ns}} \leq \frac{8 \ln 4 \; c^4 \nsamp^2 \eps^4}{\ell^2}  \sup_{\POVM \in \povmset} \sum_{m=1}^\ell \dotprodH{\Luders_\POVM(V_m)}{V_m} + 16 \ln 4 \; n^2c^4\eps^4 d \exp\{-\ell^{1/4}\}.
    \end{align}
    We can select a basis $\{V_1,V_2, \ldots ,V_\ell\}$ based on $\povmset$ to obtain a measurement dependent lower bound for adaptive state certification. Note, picking a complete orthonormal basis recovers the single copy bound $\bigOmega{\frac{\sqrt{8}^N}{\eps^2}}$ for large enough $N$.

\subsection{Hard-instance Construction for Single-Qubit Measurements}
We now revisit the task of single-qubit certification and establish a lower bound in this setting using the results from~\cref{sec:gen-bound}. We will achieve the desired lower bound by adopting the construction used for Pauli tomography~\cite[Section 6.3]{acharya2025single}. We restate the results of this construction and apply it to our lower bound framework under Pauli basis measurements. Then, we will extend our lower bound to single-qubit protocols.

An important observation is that the Pauli observables $Q \in \mathcal{Q}$ serve as the eigenbasis of the Pauli MIC.
\begin{theorem}[{\cite[Lemma 5.2]{ADLY2025Paulinot}}] Let $\POVM_P$ be the measurement along the Pauli basis $P \in \{X,Y,Z\}^N$, then for $Q \in \mathcal{Q}$,
\
\begin{align*}
    \Luders_{\POVM_P}(Q) = \indic{Q \triangleleft P} Q.
\end{align*}
\end{theorem}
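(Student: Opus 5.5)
The plan is a direct computation that uses the tensor-product structure of both the POVM and the Pauli operator. By \cref{observ}, $\POVM_P=\{M^P_x\}_{x\in\{-1,1\}^\nqubits}$ with $M^P_x=\bigotimes_{i=1}^\nqubits \frac{I+x_iP_i}{2}$. Each factor $\frac{I+x_iP_i}{2}$ is a rank-one projector, so $\Tr(M^P_x)=1$. \cref{def:mic} therefore reduces to
\begin{align*}
\Luders_{\POVM_P}(Q)=\sum_{x\in\{-1,1\}^\nqubits}\Tr(M^P_x Q)\, M^P_x .
\end{align*}

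The first step is to compute $\Tr(M^P_x Q)$ for $Q=Q_1\otimes\cdots\otimes Q_\nqubits$. The trace factorizes as $\prod_{i}\Tr\bigl(\tfrac{I+x_iP_i}{2}Q_i\bigr)$. The single-qubit Paulis are trace-orthogonal, with $\Tr(I)=\Tr(P_i^2)=2$ and $\Tr$ of any non-identity Pauli equal to $0$. Hence each factor is $1$ if $Q_i=I$, is $x_i$ if $Q_i=P_i$, and is $0$ if $Q_i\in\{X,Y,Z\}\setminus\{P_i\}$. So if $Q\not\triangleleft P$, some factor vanishes for every $x$, and $\Luders_{\POVM_P}(Q)=0$. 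If $Q\triangleleft P$, write $Q=P^S$ with $S$ the support of $Q$; then $\Tr(M^P_xQ)=\parity{x}$, which is the content of \cref{obs-pauli-fourier}.

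The second step handles $Q=P^S$, where we need $\sum_x \parity{x}\, M^P_x$. The summand factorizes across qubits and the sum over $x$ ranges over a product set, so
\begin{align*}
\sum_{x}\parity{x}\bigotimes_{i}\frac{I+x_iP_i}{2}=\bigotimes_{i=1}^\nqubits\Bigl(\sum_{x_i\in\{\pm1\}} x_i^{\indic{i\in S}}\,\frac{I+x_iP_i}{2}\Bigr).
\end{align*}
For $i\notin S$ the factor is $\frac{I+P_i}{2}+\frac{I-P_i}{2}=I$. For $i\in S$ it is $\frac{I+P_i}{2}-\frac{I-P_i}{2}=P_i$. The product is exactly $P^S=Q$. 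Combining the two cases gives $\Luders_{\POVM_P}(Q)=\indic{Q\triangleleft P}\,Q$; the case $Q=\eye$ is included, since $\eye\triangleleft P$ and the channel is trace-preserving.

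There is no real obstacle here. The only point needing care is exchanging the sum over $x\in\{-1,1\}^\nqubits$ with the tensor product. This is justified because both the coefficient $\parity{x}$ and the operator $M^P_x$ are products of single-coordinate terms, and the sum runs over a product set. One should also note that the normalization $\Tr(M^P_x)=1$ is what makes the channel's eigenvalues exactly $0$ or $1$.
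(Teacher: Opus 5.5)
Your proof is correct and complete. The paper does not prove this statement itself; it imports it from \cite{ADLY2025Paulinot}. Your argument factorizes both $\tr(M^P_x Q)$ and $\sum_x \chi_S(x) M^P_x$ qubit by qubit, which is equivalent to $\Luders_{\POVM_P}=\bigotimes_i \Luders_{\POVM_{P_i}}$, where each single-qubit channel fixes $I$ and $P_i$ and annihilates the other two Paulis. This is the standard direct argument for the fact.

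One small wording fix: $\Luders_{\POVM_P}(\eye)=\eye$ follows from unitality ($\sum_x M^P_x=\eye$), not from trace preservation. Your $S=\emptyset$ case already covers it, so nothing is missing.
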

\noindent We observe that there exists exactly $3^{N-w(Q)}$ possible Pauli bases where the eigenvalue of $Q$ is non-zero. Thus, the hardness of the construction relies solely on the weight of the Pauli observables chosen. Consider the construction with orthonormal directions $\mathcal{V} = \left\{\frac{Q}{\sqrt{2}^N} |Q \in \mathcal{Q} \land w(Q) \geq w^*\right\}$, then for any $P \in \{X,Y,Z\}^{\otimes \nqubits}$
\begin{align*}
   \sum_{m=1}^\ell \dotprodH{\Luders_\POVM(V_m)}{V_m} = \sum_{w=w^*}^\nqubits \binom{\nqubits}{w}, \quad \ell =  \sum_{w=w^*}^\nqubits 3^w \binom{\nqubits}{w} \geq 3^{w^*} \sum_{w=w^*}^\nqubits \binom{\nqubits}{w}
\end{align*}
The error analysis of the upper bound for Pauli tomography~\cite[Section 2.2]{acharya2025single} suggests that the optimal weight to use is $w^* = \lceil 9\nqubits/10\rceil$. As we are minimizing the same quantity, we will use this same intuition for certification. We can show the given weight guarantees $\ell \geq d^{3/2}$ for $N \geq 4$:
\begin{align*}
    \ell \geq \frac{3^{\lceil 9N/10 \rceil} 2^{\nqubits h(1/10)}}{\sqrt{2N}}  \geq \frac{2^{1.88 N}}{\sqrt{2N}}       \geq 2^{-0.38N}2^{1.88N} = \dims^{3/2},
\end{align*}
where Stirling's approximation for binomial coefficients~\cite[Lemma 8]{macwilliams1978theory} was used.
We bound the spectral capacity of the MIC with the same approximation,
\begin{align*}
    \frac{\sum_{m=1}^\ell \dotprod{\Luders_{\POVM_{P}}(V_m)}{V_m} }{\ell^2} \leq \frac{1}{9^{9\nqubits/10} \sum_{w=\lceil 9\nqubits/10\rceil}^\nqubits \binom{\nqubits}{w}} \leq \frac{\sqrt{2N}}{9^{9\nqubits/10} 2^{\nqubits h(9/10)}} = \frac{\sqrt{2\nqubits}}{10^\nqubits}.
\end{align*}
Observe that $\Pr[z \in \mathcal{Z}] \leq \frac{\sqrt{\nqubits}}{10^\nqubits}$ for $\nqubits \geq 10$. Substituting this result into~\cref{eq:mdep-lb},
\begin{align*}
    \frac{1}{2}\le \kldiv{\q^{\out^\ns}}{\p^{\out^\ns}} = \bigO{\frac{\eps^4 n^2 \sqrt{\nqubits}}{10^\nqubits}},
\end{align*}
 which yields the desired lower bound $\nsamp = \bigOmega{\frac{\sqrt{10}^\nqubits}{N^{1/4} \eps^2}} = \tildeOmega{\frac{\sqrt{10}^\nqubits}{\eps^2}}$ for adaptive Pauli measurements.
\paragraph{From Pauli to Single-Qubit Measurements. } It is straightforward to show that the lower bound even applies for arbitrary single-qubit measurements using the fact that Pauli observables form a unitary 2-design for $\mathcal{D}(\mathbb{C}^2)$. We restate the result proven in~\cite{acharya2025single}.
\begin{lemma}[{\cite[Lemma 6.5]{acharya2025single}}]
    Let $P_1, \ldots, P_{\ell}$ be the normalized Pauli observables of weight at least $w*$. Then for any single-qubit measurement $\POVM$,
    \begin{align*}
        \sum_{m=1}^\ell \dotprodH{P_i}{\Luders_{\POVM}(P_i)} \leq \sum_{w=w*}^N \binom{N}{w}.
    \end{align*}
\end{lemma}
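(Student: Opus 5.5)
The plan is to exploit the tensor-product structure of a single-qubit measurement. Then the quadratic form $\dotprodH{Q}{\Luders_\POVM(Q)}$ on Pauli observables factorizes over qubits, and each single-qubit factor obeys a trace budget. Write $\POVM=\{M_x\}$ with $M_x=\bigotimes_{j=1}^N M^{(j)}_{x_j}$, where $\POVM_j=\{M^{(j)}_{y}\}_y$ is a POVM on qubit $j$.

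\emph{Step 1: factorization.} Since $\Tr(M_x)=\prod_j \Tr(M^{(j)}_{x_j})$ and $\Tr(M_x \rho_1\otimes\cdots\otimes\rho_N)=\prod_j\Tr(M^{(j)}_{x_j}\rho_j)$, summing over $x=(x_1,\ldots,x_N)$ gives
\begin{align*}
\Luders_\POVM=\bigotimes_{j=1}^N \Luders_{\POVM_j}
\end{align*}
on product operators. For a Pauli string $Q=Q_1\otimes\cdots\otimes Q_N$ normalized by $2^{-N/2}$, this yields
\begin{align*}
\frac{1}{2^N}\dotprodH{Q}{\Luders_\POVM(Q)}=\prod_{j=1}^N a_j(Q_j), \qquad a_j(\tau)\eqdef \tfrac12\dotprodH{\tau}{\Luders_{\POVM_j}(\tau)}.
\end{align*}
Because $\Luders_{\POVM_j}(I)=\sum_y M^{(j)}_y=I$, we get $a_j(I)=1$. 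Each $a_j(\tau)=\frac12\sum_y \Tr(M^{(j)}_y\tau)^2/\Tr(M^{(j)}_y)$ is nonnegative.

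\emph{Step 2: the single-qubit budget.} Show that $b_j\eqdef a_j(X)+a_j(Y)+a_j(Z)\le 1$. The map $\Luders_{\POVM_j}$ is a positive semidefinite operator on Hilbert--Schmidt space. Its trace, computed in the orthonormal basis $\{I,X,Y,Z\}/\sqrt2$, is
\begin{align*}
\sum_y \frac{\Tr\big((M^{(j)}_y)^2\big)}{\Tr(M^{(j)}_y)}.
\end{align*}
Since $0\le M^{(j)}_y\le I$, we have $\Tr\big((M^{(j)}_y)^2\big)\le \Tr(M^{(j)}_y)$, so this trace is at most $\sum_y\Tr(M^{(j)}_y)=2$. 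The identity direction contributes $a_j(I)=1$, so the three non-identity directions contribute at most $1$. Outcomes with $\Tr(M^{(j)}_y)=0$ have $M^{(j)}_y=0$ and are dropped.

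\emph{Step 3: summation.} Group the weight-$\ge w^*$ Pauli strings by their support $S$, with $|S|\ge w^*$. The sum over the non-identity choices on $S$ factors as $\prod_{j\in S} b_j\cdot\prod_{j\notin S}a_j(I)=\prod_{j\in S}b_j\le 1$. Summing over supports gives the bound $\sum_{w=w^*}^N\binom{N}{w}$.

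The bound holds for every fixed single-qubit POVM, so it applies to each round of an adaptive protocol. If the protocol post-processes or coarse-grains the product outcomes, merging outcomes $x,x'$ can only decrease each term, since $\frac{(a+b)^2}{s+t}\le \frac{a^2}{s}+\frac{b^2}{t}$. Hence the bound persists.

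I expect the main point needing care to be Step 2, in particular justifying $\Tr\big((M_y)^2\big)\le\Tr(M_y)$ and correctly accounting for the identity direction. Steps 1 and 3 are routine bookkeeping.
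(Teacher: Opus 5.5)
Your proof is correct and follows essentially the same route as the cited argument. The measurement information channel of a product POVM tensorizes over qubits, the three non-identity Paulis on each qubit contribute at most $1$ (the identity contributes exactly $1$), and summing over supports of size at least $w^*$ gives $\sum_{w\ge w^*}\binom{N}{w}$; where the paper attributes the per-qubit budget to the single-qubit Pauli 2-design property, you get the same inequality from the trace of the single-qubit channel, i.e., Parseval in the Pauli basis together with $\tr(M_y^2)\le \tr(M_y)$.
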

\noindent Thus, we have proven~\cref{thm:sq-lb}.
\section*{Conclusion}
In this paper, we have nearly resolved the copy complexity of single-qubit mixedness testing and established a measurement-dependent lower bound framework for certification under adaptive single-copy measurement protocols. We believe that there are several interesting open questions in single-qubit inference, which we will highlight below.
\paragraph{Arbitrary Target States.} In this paper, we addressed the task of mixedness testing, which is the special case of certification when the target state is the maximally mixed state. In particular, the algorithm in our upper bound is heavily based on the fact that the outcome distribution of the target state remains uniform. Unlike classical identity and uniformity testing, certification upper bounds with alternative target states tend to be a bit more involved due to the non-commuting nature of states~\cite{ChenLO22instance}. It would be interesting to see what additional algorithmic techniques should be used to achieve single-qubit for states that are not maximally-mixed.
\paragraph{Structural Assumptions.} Oftentimes, it is more practical to consider low-dimensional state structure as relevant states are often pure or near pure. Thus, another noteworthy line of work would be to consider single-qubit certification under structural assumptions such as rank or even establishing instance-optimal bounds dependent on the target state. 

\section*{Acknowledgments}
JA, AD and NY are supported by the National Science Foundation under Grant No. CCF-2553759. YL is supported by the Naval Research (ONR) grant N00014-23-1-2737.

\newpage
\appendix 
\section{Sample Complexity and Statistical Proofs}\label{sec:ub-proof}
\subsection{Pauli Work Investment Lemma} 
We will prove that there exists a Levin's Work Investment style strategy for Pauli basis measurements.
\begin{proof}[Proof of~\cref{lem:pauli_work_invest}]
Before we proceed, we will first re-prove a classical Levin's Work Investment lemma.
\begin{lemma}[Levin's Work Investment Strategy]\label{lem:work_invest}
    Let $X \sim \unif[k]$. If $\expectDistrOf{}{f(X)} \geq \mu$ and $f(X)\geq 0$. Then, there exists $i \in [\lceil \log_3 k\rceil]$ such that,
    \begin{align*}
        \Pr[f(X) \geq 3^{i-2} \mu] \geq \frac{3^{-i}}{\lceil \log_3 k \rceil},
    \end{align*}
\end{lemma}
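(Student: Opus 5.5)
The plan is a contradiction argument using a dyadic-style (here triadic) bucketing of the values of $f$. Set $K \eqdef \lceil \log_3 k\rceil$; I assume $k\ge 2$ so that $K\ge 1$ and the range $[K]$ is nonempty. For $k=1$ I would read the bound with $K$ replaced by $\max\{1,K\}$, and then $i=1$ works trivially. Suppose, for contradiction, that for every $i\in[K]$ we have
\begin{align*}
\Pr[f(X)\ge 3^{i-2}\mu] < \frac{3^{-i}}{K}.
\end{align*}
The goal is to show $\expectDistrOf{}{f(X)} < \mu$, contradicting the hypothesis.

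\textbf{Step 1 (the top tail vanishes).} Apply the assumption at $i=K$. Since $3^K\ge k$, it gives $\Pr[f(X)\ge 3^{K-2}\mu] < 3^{-K}/K \le 1/k$. Because $X$ is uniform on $k$ points, every event has probability that is an integer multiple of $1/k$. Hence this probability is exactly $0$, so $f(X) < 3^{K-2}\mu$ almost surely. This discreteness step is the one place where uniformity of $X$ on a finite set is used, and I expect it to be the only delicate point. Without it, the tail above the largest threshold would be uncontrolled.

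\textbf{Step 2 (bucket the expectation).} Split the range of $f$ into the low part $[0,\mu/3)$ and the buckets $B_i=[3^{i-2}\mu,\,3^{i-1}\mu)$ for $i=1,\dots,K$. By Step 1 these cover every value that $f$ actually takes. This gives
\begin{align*}
\expectDistrOf{}{f(X)} \le \frac{\mu}{3} + \sum_{i=1}^{K} 3^{i-1}\mu\cdot \Pr[f(X)\ge 3^{i-2}\mu].
\end{align*}
Here I use $f\ge 0$, bound $f$ by the upper endpoint of its bucket, and bound $\Pr[f\in B_i]$ by $\Pr[f\ge 3^{i-2}\mu]$.

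\textbf{Step 3 (sum the buckets).} Plugging in the contradiction hypothesis, each bucket contributes less than $3^{i-1}\mu\cdot 3^{-i}/K=\mu/(3K)$. Summing over $i\in[K]$ gives a total below $\mu/3$. Hence $\expectDistrOf{}{f(X)}<\mu/3+\mu/3<\mu$, which is the desired contradiction, so some $i\in[K]$ satisfies the claimed bound.

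Afterwards, I expect \cref{lem:pauli_work_invest} to follow by applying this with $X=P$ uniform over the $k=3^N$ bases and $f(P)=L_P$, so that $K=N$. What remains is to check $\expectDistrOf{P}{L_P} = L$. This holds because each $Q\neq\eye_d$ satisfies $Q\triangleleft P$ for exactly $3^{N-w(Q)}$ of the $3^N$ bases, so the weight $3^{w(Q)}$ exactly cancels the probability $3^{-w(Q)}$ that $Q\triangleleft P$.
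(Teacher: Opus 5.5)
Your proof is correct and is the same argument as the paper's. You assume the bound fails for every $i\in[K]$, use the uniformity of $X$ on $k$ points to force $\Pr[f(X)\ge 3^{K-2}\mu]=0$, and then bucket the values of $f$ by powers of $3$ so that each bucket contributes less than $\mu/(3K)$, which gives $\mathbb{E}[f(X)]<2\mu/3<\mu$. Your indexing $B_i=[3^{i-2}\mu,3^{i-1}\mu)$ is in fact tidier than the paper's written bins $[3^j\mu,3^{j+1}\mu]$ for $j\ge 1$, which as displayed omit the range $[\mu/3,3\mu)$.
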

\begin{proof}
Let $M=\lceil \log_3 k \rceil$.
    We proceed by contradiction. Assume $\Pr[f(X) \geq 3^{i-2} \mu] < \frac{3^{-i}}{M}$ for all $i \in [M]$. We observe that $\Pr[f(X) \geq k \mu/3] = 0$ since any element on the support of $f(X)$ should have probability $\geq 1/k$. Thus, we can split the expectation up into a finite number of bins $B_j = \{x: 3^j  \mu \leq f(x) \leq 3^{j+1} \mu\}$.
    \begin{align*}
        \expectDistrOf{}{f(X)} &\leq \frac{\mu}{3} +  \sum_{j=1}^{M-1} \Pr[f(X) \in B_j] \cdot 3^{j+1} \mu \\
                            &< \frac{\mu}{3} + \frac{\mu}{3} \sum_{j=1}^{\log_3 k} \frac{1}{M} \leq \frac{\mu}{3} + \frac{\mu}{3} < \mu
    \end{align*}
    We have reached a contradiction.
\end{proof}
\noindent We can now show \cref{lem:pauli_work_invest} as a direct consequence of \cref{lem:work_invest}.
    We look at the expectation of $L_P$,
    \begin{align*}
    \expectDistrOf{}{L_P} &= \frac{1}{3^N} \sum_{P \in \{X,Y,Z\}^{\otimes N}} \sum_{Q \triangleleft P} 3^{w(Q)} \alpha_Q^2 \\
    &= \frac{1}{3^N} \sum_{Q \in \mathcal{Q}} \alpha_Q^2 \cdot 3^{N-w(Q)} \cdot 3^{w(Q)} = L.
    \end{align*}
    The statement is proven by applying \cref{lem:work_invest}.
\end{proof}
\subsection{Correlation-Concentrated Testing}
We will prove the correctness of the \verb|KrawtCheck| subroutine, giving us a polynomial-time uniformity tester of correlation-sparse distributions.
\begin{proof}[Proof of~\cref{thm:sparse-test}]
    It is clear that \cref{alg:sparse} uses $\mathcal{O}(Mn) = \bigO{\frac{\sqrt{\binom{N}{k}}}{\beta\Delta} \ln \frac{1}{\delta}}$ samples. 

    \noindent We will show that $W_k$ is an unbiased statistic for $\|\alpha^{=k}\|_2^2$ by making use of the observation from~\cref{eq:k-sym}
    \begin{align*}
    f(x) = \sum_{|S| = k} \parity{x} = K_k^{(N)}(w_{H}(x)).
    \end{align*}
    Coupled with the fact that $d_{H}(x,y) = w_H(x \oplus y)$,
    \begin{align*}
        \expectDistrOf{}{W_k} &= \expectDistrOf{X,Y \sim p}{\krawt{k}{w_{H}(X \oplus Y)}} = \sum_{x} p(x) \sum_{y} p(y) f(x \oplus y)\\
        &= 2^N \sum_{x} p(x) (p*f)(x) = 4^N \dotprodB{p}{p*f}.
    \end{align*}
    By the convolution property of Boolean functions, $p*f (x) = \frac{1}{2^N} \sum_{|S|=k} \parity{x} \alpha(S)$. By Parseval's Theorem, the inner product $\dotprod{p}{p*f}_B$ is the inner product of their Fourier expansions, resulting in $\frac{1}{4^N} \|\alpha^{=k}\|_2^2$. 
    
    We have shown that $W_k$ is an unbiased statistic of $\|\alpha^{=k}\|_2^2$, which is $0$ when $p=u$ and $\geq \beta\Delta$ when $p \in \mathcal{P}$. Thus, it suffices to estimate $W_k$ with accuracy $\kappa_1 \cdot \beta \Delta$ when $p=u$ and $\kappa_2 \cdot \|\alpha^{=k}\|_2^2$ when $p \in \mathcal{P}$ to distinguish between the two classes. We proceed to bound the variance of $W_k$ to show that the error is well-controlled for detection,
    \begin{align*}
        \Var(W_k) &= \frac{1}{n^4} \expectDistrOf{}{\left(\sum_{i=1}^n \sum_{j=1}^n \krawt{k}{d_{H}(X_i,Y_i)}\right)^2} - \|\alpha^{=k}\|_2^4.
    \end{align*}
    When we expand the second moment, we will look at correlations between pairs $(X,Y), (X',Y')$. We will split the correlations into three cases: (1) when the pairs are disjoint, (2) when the pairs share one element, and (3) when the pairs are the same. 
    \begin{align*}
    \expectDistrOf{}{\left(\sum_{i=1}^n \sum_{j=1}^n \krawt{k}{d_{H}(X_i,Y_i)}\right)^2} =\; &n^2(n-1)^2 \|\alpha^{=k}\|_2^4 + 2n^3 \expectDistrOf{}{\krawt{k}{d_{H}(X,Y)}\krawt{k}{d_{H}(X,Y')}} \\
    &+ n^2 \expectDistrOf{}{\krawt{k}{d_{H}(X,Y)}^2}.
    \end{align*}
    The first case is simple as the two variables will be independent. Consider the expectation when the pairs of samples share exactly one element,
    \begin{align*}
        \expectDistrOf{}{\krawt{k}{d_{H}(X,Y)}\krawt{k}{d_{H}(X,Y')}} &= \sum_{x} p(x) \left(\sum_{y} p(y) \krawt{k}{w_{H}(x\oplus y)}\right)^2 \\
        &= 8^N \dotprodB{p}{(p*f)^2}
    \end{align*}
    We make the observation that 
    \begin{align*}
    (p*f)^2(x) = \frac{1}{4^N} \sum_{|S_1|=|S_2|=k} \alpha(S_1) \alpha(S_2) \cdot \chi_{S_1 \Delta S_2}(x). 
    \end{align*}
    Define a new set of coefficients that includes the empty set: $\tilde{\alpha}(S) = \alpha(S)$ and $\tilde{\alpha}(\emptyset) = 1$. Applying Parseval's Theorem,
    \begin{align*}
        \expectDistrOf{}{\krawt{k}{d_{H}(X,Y)}\krawt{k}{d_{H}(X,Y')}} = \sum_{|S_1|=|S_2|=k} \tilde{\alpha}(S_1 \Delta S_2) \alpha(S_1) \alpha(S_2).
    \end{align*}
    For convenience, we define the quantity 
    \begin{align*}
    g(k) \eqdef \begin{cases}
        2k & \text{if }  k \leq \nqubits/2 \\
        2(\nqubits-k) &  \text{otherwise}.
    \end{cases}
    \end{align*}
    We can now complete the bound with Cauchy-Schwarz inequality,
    \begin{align*}
        \expectDistrOf{}{\krawt{k}{d_{H}(X,Y)}\krawt{k}{d_{H}(X,Y')}} &\leq \sqrt{\sum_{|S_1|=|S_2|=k} \tilde{\alpha}(S_1 \Delta S_2)^2} \cdot \sqrt{\sum_{|S_1|=|S_2|=k} \alpha(S_1)^2 \alpha(S_2)^2} \\
        &\leq \sqrt{\sum_{\substack{|S| \leq g(k) \\ |S| \text{ even}}} 2 \binom{|S|}{|S|/2} \binom{N-|S|}{k-|S|/2} \cdot \tilde{\alpha}(S)^2}  \cdot \|\alpha^{=k}\|_2^2 \\
        &\leq \sqrt{ 2 \binom{N}{k}} \|\alpha\|_2 \|\alpha^{=k}\|_2^2 + \sqrt{2 \binom{N}{k}} \|\alpha^{=k}\|_2^2 \\
        &\leq \sqrt{2 \beta^{-1} \binom{N}{k}} \|\alpha^{=k}\|_2^3 + \sqrt{2 \binom{N}{k}} \|\alpha^{=k}\|_2^2,
    \end{align*}
    where we used the fact that $\sqrt{a+b} \leq \sqrt{a} + \sqrt{b}$ and the assumption $\|\alpha^{=k}\|_2^2 \geq \beta\|\alpha\|_2^2$ when $p \in \mathcal{P}^{(\Delta, k ,\beta)}$. Now, we shift our attention to bounding the second moment,
    \begin{align*}
        \expectDistrOf{}{\krawt{k}{d_{H}(X,Y)}^2} &= \sum_{x} p(x) \sum_{y} p(y) \krawt{k}{w_{H}(x \oplus y)}^2 \\
        &= 4^N \dotprodB{p}{p * f^2}.
    \end{align*}
    We make an observation regarding the convolution $p*f^2$.
    \begin{align*}
    f^2(x) &= \sum_{|S_1|=|S_2|=k} \chi_{S_1 \Delta S_2}(x) = 2 \sum_{\substack{1 \leq |S| \leq g(k) \\ |S | \text{ even}}} \binom{|S|}{|S|/2} \binom{N-|S|}{k-|S|/2} \chi_{S}(x) + \binom{N}{k} \chi_{\emptyset}(x) \\ 
    p*f^2(x) &= \frac{2}{2^N} \sum_{\substack{|S| \leq g(k) \\ |S | \text{ even}}} \binom{|S|}{|S|/2}  \binom{N-|S|}{k-|S|/2} \alpha(S) \cdot \chi_{S}(x) + \frac{\binom{N}{k}}{2^N}
    \end{align*}
    We can apply Parseval's Theorem once more to obtain the following second moment bound,
    \begin{align*}
        \expectDistrOf{}{\krawt{k}{d_{H}(X,Y)}^2} &= 2 \sum_{\substack{|S| \leq 2k \\ |S | \text{ even}}} \binom{|S|}{|S|/2} \binom{N-|S|}{k-|S|/2} \alpha(S)^2 + \binom{N}{k} \\
        &\leq 2 \binom{N}{k} \|\alpha\|_2^2 + 2 \binom{N}{k} \leq 2\beta^{-1}\binom{N}{k} \|\alpha^{=k}\|_2^2 + 2 \binom{N}{k}.
    \end{align*}
    Thus, 
    \begin{align*}
        \Var(W_k) &\leq \frac{ 2\sqrt{ 2 \binom{N}{k}} \|\alpha^{=k}\|_2^3}{\sqrt{\beta} n} + \frac{2\sqrt{2 \binom{N}{k}} \|\alpha^{=k}\|_2^2}{n} +\frac{2 \binom{N}{k} \|\alpha^{=k}\|_2^2}{\beta n^2} + \frac{2 \binom{N}{k}}{n^2}.
    \end{align*}
    With $n = \frac{400\sqrt{2 \binom{N}{k}}}{\beta \Delta}$, the variance under $p \in \mathcal{P}$ will be
    \begin{align*}
        \Var(W_k) \leq \frac{\sqrt{\beta \Delta} \|\alpha^{=k}\|_2^3}{200}  + \frac{\beta \Delta \|\alpha^{=k}\|_2^2}{200} + \frac{\beta \Delta \|\alpha^{=k}\|_2^2}{40000} + \frac{\beta^2 \Delta^2}{40000} \leq \frac{\|\alpha^{=k}\|_2^4}{49}.
    \end{align*}
    Since $\|\alpha^{=k}\| = 0$ when $p=u$, the variance under the null hypothesis will be
    \begin{align*}
    \Var(W_k) &\leq\frac{2\binom{N}{k}}{n^2} \leq \frac{\beta^2 \Delta^2}{40000}. 
    \end{align*}
    By Chebyshev's inequality,
    \begin{align*}
        \Pr_{p \in \mathcal{P}}\left[W_k \geq \frac{\|\alpha^{=k}\|_2^2}{50} \geq \frac{\beta \Delta}{50}\right] \geq \frac{3}{4}, \quad \Pr_{p=u}\left[W_k < \frac{\beta \Delta}{50}\right] \geq \frac{3}{4}.
    \end{align*}
    
    By symmetry of the error probabilities, it suffices to look at the error of the final statistic under the alternative hypothesis WLOG. When $p \in \mathcal{P}$, $\sum_{i=1}^M T_i \sim Bin(\gamma, M)$, where $\gamma \geq \frac{3}{4}$. By Chernoff's bound,
    \begin{align*}
    \Pr\left[\sum_{i=1}^M T_i \leq \frac{M}{2}\right] \leq e^{-\frac{M}{16}} = \delta,
    \end{align*}
    Thus, we have proved the correctness of~\cref{alg:sparse}.
    \end{proof}
    \section{Computational Efficiency of\texorpdfstring{~\cref{alg:sparse}}{Algorithm 2}}
    We now prove that correlation-concentrated uniformity testing can be done in $poly(n,\nqubits,k)$ time.
    \begin{proof}[Proof of~\cref{cor:eff-sparse}]
    An important observation is that the Krawtchouk polynomials can be generated through a 3-term recurrence~\cite[Chapter 9.11]{koekoek2010hypergeometric},
    \begin{align*}
        \krawt{k}{t} = \frac{2N - 3k -t}{2(n-k)} \krawt{k-1}{t} + \frac{k}{2(n-k)} \krawt{k-2}{t}.
    \end{align*}    
    Since there is a constant number of additions, divisions and multiplications of integers in $[-2^N, 2^N]$, each recurrence can be computed in time $\bigO{\nqubits \log \nqubits}$ using standard fast multiplication algorithms~\cite{harvey2021integer}, as division can be reduced to multiplication via Newton-Raphson method. Therefore, iteratively computing $\krawt{k}{t}$ can be done in $\bigO{k\nqubits \log \nqubits}$ time with standard dynamic programming methods. Coupled with the additional $\bigO{n^2 \cdot \nqubits}$ time for computing the Hamming distance between samples and summing over $n^2$ integers of magnitude $2^\nqubits$, we have an algorithm that runs in $\bigO{n^2 \cdot k\nqubits \log \nqubits + 2 n^2 \cdot \nqubits} = \bigO{n^2 \cdot k\nqubits \log \nqubits}$ time, yielding an exponential improvement over the naïve parity-enumeration algorithm.
    \end{proof}

% that's all folks
\bibliography{ref}

@article{flamia2011direct,
  title = {Direct Fidelity Estimation from Few Pauli Measurements},
  author = {Flammia, Steven T. and Liu, Yi-Kai},
  journal = {Phys. Rev. Lett.},
  volume = {106},
  issue = {23},
  pages = {230501},
  numpages = {4},
  year = {2011},
  month = {Jun},
  publisher = {American Physical Society},
  doi = {10.1103/PhysRevLett.106.230501},
  url = {https://link.aps.org/doi/10.1103/PhysRevLett.106.230501}
}

@article{10.1145/3717450,
author = {Nayak, Ashwin and Lowe, Angus},
title = {Lower Bounds for Learning Quantum States with Single-Copy Measurements},
year = {2025},
issue_date = {March 2025},
publisher = {Association for Computing Machinery},
address = {New York, NY, USA},
volume = {17},
number = {1},
issn = {1942-3454},
url = {https://doi.org/10.1145/3717450},
doi = {10.1145/3717450},
journal = {ACM Trans. Comput. Theory},
month = mar,
articleno = {7},
numpages = {42}
}

@article{Elben_2022,
   title={The randomized measurement toolbox},
   volume={5},
   ISSN={2522-5820},
   url={http://dx.doi.org/10.1038/s42254-022-00535-2},
   DOI={10.1038/s42254-022-00535-2},
   number={1},
   journal={Nature Reviews Physics},
   publisher={Springer Science and Business Media LLC},
   author={Elben, Andreas and Flammia, Steven T. and Huang, Hsin-Yuan and Kueng, Richard and Preskill, John and Vermersch, Benoît and Zoller, Peter},
   year={2022},
   month=dec, pages={9–24} }

@article{Yuen_2023,
   title={An Improved Sample Complexity Lower Bound for (Fidelity) Quantum State Tomography},
   volume={7},
   ISSN={2521-327X},
   url={http://dx.doi.org/10.22331/q-2023-01-03-890},
   DOI={10.22331/q-2023-01-03-890},
   journal={Quantum},
   publisher={Verein zur Forderung des Open Access Publizierens in den Quantenwissenschaften},
   author={Yuen, Henry},
   year={2023},
   month=jan, pages={890} }

@INPROCEEDINGS{BOW17,
  Title                    = {Quantum state certification},
  Author                   = {C. B\u{a}descu and R. O'Donnell and J. Wright},
Booktitle                = {Proceedings of the Forty-Nineth Annual ACM on Symposium on Theory of Computing},
  Series                   = {STOC '19},
 year = {2019}
}

@Article{MdW13,
  Title                    = {A Survey of Quantum Property Testing},

  Author                   = {A. Montanaro and R. de Wolf},
  Year                     = {2016},
Volume                     ={7},
  Journal                   = {Theory of Computing Graduate Surveys},
}

@InProceedings{OW15,
  Title                    = {Quantum Spectrum Testing},
  Author                   = {O'Donnell, R. and Wright, J.},
  Booktitle                = {Proceedings of the Forty-Seventh Annual ACM on Symposium on Theory of Computing},
  Year                     = {2015},
  Pages                    = {529--538},
  Series                   = {STOC '15},
}

@Article{GrossLFBE10,
  title={Quantum State Tomography via Compressed Sensing},
   volume={105},
   ISSN={1079-7114},
   url={http://dx.doi.org/10.1103/PhysRevLett.105.150401},
   DOI={10.1103/physrevlett.105.150401},
   number={15},
   journal={Physical Review Letters},
   publisher={American Physical Society (APS)},
   author={Gross, David and Liu, Yi-Kai and Flammia, Steven T. and Becker, Stephen and Eisert, Jens},
   year={2010},
   month=oct }

@article{AGKE15,
  title = {Reliable quantum certification of photonic state preparations},
  author = {Leandro Aolita and Christian Gogolin and Martin Kliesch and Jens Eisert},
  journal = {Nature Communications},
  volume = {6},
  issue = {8498},
  year = {2015},
}

@article{PhysRevLett.107.210404,
  title = {Practical Characterization of Quantum Devices without Tomography},
  author = {da Silva, Marcus P. and Landon-Cardinal, Olivier and Poulin, David},
  journal = {Phys. Rev. Lett.},
  volume = {107},
  issue = {21},
  pages = {210404},
  year = {2011},
}

@inproceedings{Chen0L24memory,
  author       = {Sitan Chen and
                  Jerry Li and
                  Allen Liu},
  editor       = {Bojan Mohar and
                  Igor Shinkar and
                  Ryan O'Donnell},
  title        = {An Optimal Tradeoff between Entanglement and Copy Complexity for State
                  Tomography},
  booktitle    = {Proceedings of the 56th Annual {ACM} Symposium on Theory of Computing,
                  {STOC} 2024, Vancouver, BC, Canada, June 24-28, 2024},
  pages        = {1331--1342},
  publisher    = {{ACM}},
  year         = {2024},
  url          = {https://doi.org/10.1145/3618260.3649704},
  doi          = {10.1145/3618260.3649704},
  bibsource    = {dblp computer science bibliography, https://dblp.org}
}

@article{Flammia_2012,
   title={Quantum tomography via compressed sensing: error bounds, sample complexity and efficient estimators},
   volume={14},
   ISSN={1367-2630},
   url={http://dx.doi.org/10.1088/1367-2630/14/9/095022},
   DOI={10.1088/1367-2630/14/9/095022},
   number={9},
   journal={New Journal of Physics},
   publisher={IOP Publishing},
   author={Flammia, Steven T and Gross, David and Liu, Yi-Kai and Eisert, Jens},
   year={2012},
   month={Sep},
   pages={095022}
}

@inproceedings{BubeckC020,
  author       = {S{\'{e}}bastien Bubeck and
                  Sitan Chen and
                  Jerry Li},
  editor       = {Sandy Irani},
  title        = {Entanglement is Necessary for Optimal Quantum Property Testing},
  booktitle    = {61st {IEEE} Annual Symposium on Foundations of Computer Science, {FOCS}
                  2020, Durham, NC, USA, November 16-19, 2020},
  pages        = {692--703},
  publisher    = {{IEEE}},
  year         = {2020},
  url          = {https://doi.org/10.1109/FOCS46700.2020.00070},
  doi          = {10.1109/FOCS46700.2020.00070},
  bibsource    = {dblp computer science bibliography, https://dblp.org}
}

@inproceedings{Chen0HL22,
  author       = {Sitan Chen and
                  Jerry Li and
                  Brice Huang and
                  Allen Liu},
  title        = {Tight Bounds for Quantum State Certification with Incoherent Measurements},
  booktitle    = {63rd {IEEE} Annual Symposium on Foundations of Computer Science, {FOCS}
                  2022, Denver, CO, USA, October 31 - November 3, 2022},
  pages        = {1205--1213},
  publisher    = {{IEEE}},
  year         = {2022},
  url          = {https://doi.org/10.1109/FOCS54457.2022.00118},
  doi          = {10.1109/FOCS54457.2022.00118},
  bibsource    = {dblp computer science bibliography, https://dblp.org}
}

@article{DiakonikolasGPP19,
  author       = {Ilias Diakonikolas and
                  Themis Gouleakis and
                  John Peebles and
                  Eric Price},
  title        = {Collision-Based Testers are Optimal for Uniformity and Closeness},
  journal      = {Chic. J. Theor. Comput. Sci.},
  volume       = {2019},
  year         = {2019},
  url          = {http://cjtcs.cs.uchicago.edu/articles/2019/1/contents.html},
  bibsource    = {dblp computer science bibliography, https://dblp.org}
}

@inproceedings{ChenLO22instance,
  author       = {Sitan Chen and
                  Jerry Li and
                  Ryan O'Donnell},
  editor       = {Po{-}Ling Loh and
                  Maxim Raginsky},
  title        = {Toward Instance-Optimal State Certification With Incoherent Measurements},
  booktitle    = {Conference on Learning Theory, 2-5 July 2022, London, {UK}},
  series       = {Proceedings of Machine Learning Research},
  volume       = {178},
  pages        = {2541--2596},
  publisher    = {{PMLR}},
  year         = {2022},
  url          = {https://proceedings.mlr.press/v178/chen22b.html},
  bibsource    = {dblp computer science bibliography, https://dblp.org}
}

@article{flamian2023tomography,
  author       = {Steven T. Flammia and
                  Ryan O'Donnell},
  title        = {Quantum chi-squared tomography and mutual information testing},
   journal={Quantum},
  volume={8},
  pages={1381},
  year={2024},
  publisher={Verein zur F{\"o}rderung des Open Access Publizierens in den Quantenwissenschaften},
  url          = {https://doi.org/10.48550/arXiv.2305.18519},
  doi          = {10.48550/ARXIV.2305.18519},
}

@inproceedings{BadescuO019,
  author       = {Costin Badescu and
                  Ryan O'Donnell and
                  John Wright},
  editor       = {Moses Charikar and
                  Edith Cohen},
  title        = {Quantum state certification},
  booktitle    = {Proceedings of the 51st Annual {ACM} {SIGACT} Symposium on Theory
                  of Computing, {STOC} 2019, Phoenix, AZ, USA, June 23-26, 2019},
  pages        = {503--514},
  publisher    = {{ACM}},
  year         = {2019},
  url          = {https://doi.org/10.1145/3313276.3316344},
  doi          = {10.1145/3313276.3316344},
  bibsource    = {dblp computer science bibliography, https://dblp.org}
}

@inproceedings{chen2023does,
  author       = {Sitan Chen and
                  Brice Huang and
                  Jerry Li and
                  Allen Liu and
                  Mark Sellke},
  title        = {When Does Adaptivity Help for Quantum State Learning?},
  booktitle    = {64th {IEEE} Annual Symposium on Foundations of Computer Science, {FOCS}
                  2023, Santa Cruz, CA, USA, November 6-9, 2023},
  pages        = {391--404},
  publisher    = {{IEEE}},
  year         = {2023},
  url          = {https://doi.org/10.1109/FOCS57990.2023.00029},
  doi          = {10.1109/FOCS57990.2023.00029},
  bibsource    = {dblp computer science bibliography, https://dblp.org}
}

@article{HaahHJWY17,
  author       = {Jeongwan Haah and
                  Aram W. Harrow and
                  Zhengfeng Ji and
                  Xiaodi Wu and
                  Nengkun Yu},
  title        = {Sample-Optimal Tomography of Quantum States},
  journal      = {{IEEE} Trans. Inf. Theory},
  volume       = {63},
  number       = {9},
  pages        = {5628--5641},
  year         = {2017},
  url          = {https://doi.org/10.1109/TIT.2017.2719044},
  doi          = {10.1109/TIT.2017.2719044},
  bibsource    = {dblp computer science bibliography, https://dblp.org}
}

@article{KRT14,
  title={Low rank matrix recovery from rank one measurements},
  author={Richard Kueng and Holger Rauhut and Ulrich Terstiege},
  journal={Applied and Computational Harmonic Analysis},
  volume={42},
  number={1},
  pages={88--116},
  year={2017},
  publisher={Elsevier}
}

@inproceedings{ODonnellW17,
  author       = {Ryan O'Donnell and
                  John Wright},
  editor       = {Hamed Hatami and
                  Pierre McKenzie and
                  Valerie King},
  title        = {Efficient quantum tomography {II}},
  booktitle    = {Proceedings of the 49th Annual {ACM} {SIGACT} Symposium on Theory
                  of Computing, {STOC} 2017, Montreal, QC, Canada, June 19-23, 2017},
  pages        = {962--974},
  publisher    = {{ACM}},
  year         = {2017},
  url          = {https://doi.org/10.1145/3055399.3055454},
  doi          = {10.1145/3055399.3055454},
  bibsource    = {dblp computer science bibliography, https://dblp.org}
}

@inproceedings{ODonnellW16,
  author       = {Ryan O'Donnell and
                  John Wright},
  editor       = {Daniel Wichs and
                  Yishay Mansour},
  title        = {Efficient quantum tomography},
  booktitle    = {Proceedings of the 48th Annual {ACM} {SIGACT} Symposium on Theory
                  of Computing, {STOC} 2016, Cambridge, MA, USA, June 18-21, 2016},
  pages        = {899--912},
  publisher    = {{ACM}},
  year         = {2016},
  url          = {https://doi.org/10.1145/2897518.2897544},
  doi          = {10.1145/2897518.2897544},
  bibsource    = {dblp computer science bibliography, https://dblp.org}
}

@article{huang2020predicting,
  title={Predicting many properties of a quantum system from very few measurements},
  author={Huang, Hsin-Yuan and Kueng, Richard and Preskill, John},
  journal={Nature Physics},
  volume={16},
  number={10},
  pages={1050--1057},
  year={2020},
  publisher={Nature Publishing Group UK London}
}

@article{Aaronson20,
  author       = {Scott Aaronson},
  title        = {Shadow Tomography of Quantum States},
  journal      = {{SIAM} J. Comput.},
  volume       = {49},
  number       = {5},
  year         = {2020},
  url          = {https://doi.org/10.1137/18M120275X},
  doi          = {10.1137/18M120275X},
  bibsource    = {dblp computer science bibliography, https://dblp.org}
}

@inproceedings{BadescuO21,
  author       = {Costin Badescu and
                  Ryan O'Donnell},
  editor       = {Samir Khuller and
                  Virginia Vassilevska Williams},
  title        = {Improved Quantum data analysis},
  booktitle    = {{STOC} '21: 53rd Annual {ACM} {SIGACT} Symposium on Theory of Computing,
                  Virtual Event, Italy, June 21-25, 2021},
  pages        = {1398--1411},
  publisher    = {{ACM}},
  year         = {2021},
  url          = {https://doi.org/10.1145/3406325.3451109},
  doi          = {10.1145/3406325.3451109},
  bibsource    = {dblp computer science bibliography, https://dblp.org}
}

@ARTICLE{Yu2023almost,
  author={Yu, Nengkun},
  journal={IEEE Transactions on Information Theory}, 
  title={Almost Tight Sample Complexity Analysis of Quantum Identity Testing by Pauli Measurements}, 
  year={2023},
  volume={69},
  number={8},
  pages={5060-5068},
  doi={10.1109/TIT.2023.3271206}}

@inproceedings{Yu21sample,
  author       = {Nengkun Yu},
  editor       = {James R. Lee},
  title        = {Sample Efficient Identity Testing and Independence Testing of Quantum
                  States},
  booktitle    = {12th Innovations in Theoretical Computer Science Conference, {ITCS}
                  2021, January 6-8, 2021, Virtual Conference},
  series       = {LIPIcs},
  volume       = {185},
  pages        = {11:1--11:20},
  publisher    = {Schloss Dagstuhl - Leibniz-Zentrum f{\"{u}}r Informatik},
  year         = {2021},
  url          = {https://doi.org/10.4230/LIPIcs.ITCS.2021.11},
  doi          = {10.4230/LIPICS.ITCS.2021.11},
  bibsource    = {dblp computer science bibliography, https://dblp.org}
}

@article{Yu2020Pauli,
      title={Sample efficient tomography via {P}auli Measurements}, 
      author={Nengkun Yu},
      journal={CoRR},
      year={2020},
      volume={abs/2009.04610},
      eprint={2009.04610},
      archivePrefix={arXiv},
      primaryClass={quant-ph},
      url={https://arxiv.org/abs/2009.04610}, 
}

@inproceedings{Liu2011universal,
  author       = {Yi{-}Kai Liu},
  editor       = {John Shawe{-}Taylor and
                  Richard S. Zemel and
                  Peter L. Bartlett and
                  Fernando C. N. Pereira and
                  Kilian Q. Weinberger},
  title        = {Universal low-rank matrix recovery from Pauli measurements},
  booktitle    = {Advances in Neural Information Processing Systems 24: 25th Annual
                  Conference on Neural Information Processing Systems 2011. Proceedings
                  of a meeting held 12-14 December 2011, Granada, Spain},
  pages        = {1638--1646},
  year         = {2011},
  url          = {https://proceedings.neurips.cc/paper/2011/hash/e820a45f1dfc7b95282d10b6087e11c0-Abstract.html},
  bibsource    = {dblp computer science bibliography, https://dblp.org}
}

@article{Paninski08,
  author       = {Liam Paninski},
  title        = {A Coincidence-Based Test for Uniformity Given Very Sparsely Sampled
                  Discrete Data},
  journal      = {{IEEE} Trans. Inf. Theory},
  volume       = {54},
  number       = {10},
  pages        = {4750--4755},
  year         = {2008},
  url          = {https://doi.org/10.1109/TIT.2008.928987},
  doi          = {10.1109/TIT.2008.928987},
  bibsource    = {dblp computer science bibliography, https://dblp.org}
}

@ARTICLE{ACLST22iiuic,
  author={Acharya, Jayadev and Canonne, Clément L. and Liu, Yuhan and Sun, Ziteng and Tyagi, Himanshu},
  journal={IEEE Transactions on Information Theory}, 
  title={Interactive Inference Under Information Constraints}, 
  year={2022},
  volume={68},
  number={1},
  pages={502-516},
  doi={10.1109/TIT.2021.3123905}}

@misc{grewal2026paulpure,
      title={Nearly Time-Optimal Pure State Tomography with Pauli Measurements}, 
      author={Sabee Grewal and Meghal Gupta and William He and Aniruddha Sen and Mihir Singhal},
      year={2026},
      eprint={2601.04444},
      archivePrefix={arXiv},
      primaryClass={quant-ph},
      url={https://arxiv.org/abs/2601.04444}, 
}

@InProceedings{Liu2024role,
  title = 	 {The role of randomness in quantum state certification with unentangled measurements},
  author =       {Liu, Yuhan and Acharya, Jayadev},
  booktitle = 	 {Proceedings of Thirty Seventh Conference on Learning Theory},
  pages = 	 {3523--3555},
  year = 	 {2024},
  editor = 	 {Agrawal, Shipra and Roth, Aaron},
  volume = 	 {247},
  series = 	 {Proceedings of Machine Learning Research},
  month = 	 {30 Jun--03 Jul},
  publisher =    {PMLR},
  url = 	 {https://proceedings.mlr.press/v247/liu24a.html}
}

@article{liu2024restricted,
      title={Quantum state testing with restricted measurements}, 
      author={Yuhan Liu and Jayadev Acharya},
      year={2024},
      journal={CoRR},
      volume= {abs/2408.17439},
      eprint={2408.17439},
      archivePrefix={arXiv},
      primaryClass={quant-ph},
      url={https://arxiv.org/abs/2408.17439}, 
}

@inproceedings{ADLY2025Paulinot,
author = {Acharya, Jayadev and Dharmavarapu, Abhilash and Liu, Yuhan and Yu, Nengkun},
title = {Pauli Measurements Are Not Optimal for Single-Copy Tomography},
year = {2025},
isbn = {9798400715105},
publisher = {Association for Computing Machinery},
address = {New York, NY, USA},
url = {https://doi.org/10.1145/3717823.3718248},
doi = {10.1145/3717823.3718248},
booktitle = {Proceedings of the 57th Annual ACM Symposium on Theory of Computing},
pages = {718–729},
numpages = {12},
location = {Prague, Czechia},
series = {STOC '25}
}

@misc{scharnhorst2025,
      title={Optimal lower bounds for quantum state tomography}, 
      author={Thilo Scharnhorst and Jack Spilecki and John Wright},
      year={2025},
      eprint={2510.07699},
      archivePrefix={arXiv},
      primaryClass={quant-ph},
      url={https://arxiv.org/abs/2510.07699}, 
}

@misc{acharya2025single,
      title={Pauli Measurements Are Near-Optimal for Single-Qubit Tomography}, 
      author={Jayadev Acharya and Abhilash Dharmavarapu and Yuhan Liu and Nengkun Yu},
      year={2025},
      eprint={2507.22001},
      archivePrefix={arXiv},
      primaryClass={quant-ph},
      url={https://arxiv.org/abs/2507.22001}, 
}

@misc{wang2024sampleoptimal,
      title={Sample-Optimal Quantum Estimators for Pure-State Trace Distance and Fidelity via Samplizer}, 
      author={Qisheng Wang and Zhicheng Zhang},
      year={2024},
      eprint={2410.21201},
      archivePrefix={arXiv},
      primaryClass={quant-ph},
      url={https://arxiv.org/abs/2410.21201}, 
}

@misc{odonnell2025instanceocert,
      title={Instance-Optimal Quantum State Certification with Entangled Measurements}, 
      author={Ryan O'Donnell and Chirag Wadhwa},
      year={2025},
      eprint={2507.06010},
      archivePrefix={arXiv},
      primaryClass={quant-ph},
      url={https://arxiv.org/abs/2507.06010}, 
}

@InProceedings{odonnell2018closenesskwiseuniformity,
  author =	{O'Donnell, Ryan and Zhao, Yu},
  title =	{{On Closeness to k-Wise Uniformity}},
  booktitle =	{Approximation, Randomization, and Combinatorial Optimization. Algorithms and Techniques (APPROX/RANDOM 2018)},
  pages =	{54:1--54:19},
  series =	{Leibniz International Proceedings in Informatics (LIPIcs)},
  ISBN =	{978-3-95977-085-9},
  ISSN =	{1868-8969},
  year =	{2018},
  volume =	{116},
  editor =	{Blais, Eric and Jansen, Klaus and D. P. Rolim, Jos\'{e} and Steurer, David},
  publisher =	{Schloss Dagstuhl -- Leibniz-Zentrum f{\"u}r Informatik},
  address =	{Dagstuhl, Germany},
  URL =		{https://drops.dagstuhl.de/entities/document/10.4230/LIPIcs.APPROX-RANDOM.2018.54},
  URN =		{urn:nbn:de:0030-drops-94581},
  doi =		{10.4230/LIPIcs.APPROX-RANDOM.2018.54}
}

@misc{odonnell2021analysisbooleanfunctions,
      title={Analysis of Boolean Functions}, 
      author={Ryan O'Donnell},
      year={2021},
      eprint={2105.10386},
      archivePrefix={arXiv},
      primaryClass={cs.DM},
      url={https://arxiv.org/abs/2105.10386}, 
}

@inbook{macwilliams1978theory,
  title={The Theory of Error-Correcting Codes},
  author={MacWilliams, F.J. and MacWilliams, F.J. and Sloane, N.J.A.},
  number={v. 16},
  chapter={10},
  isbn={9780444851932},
  lccn={76041296},
  series={Mathematical Studies},
  url={https://books.google.com/books?id=LuomAQAAIAAJ},
  year={1978},
  publisher={Elsevier Science}
}

@InProceedings{goldreich2014input,
  author =	{Goldreich, Oded},
  title =	{{On Multiple Input Problems in Property Testing}},
  booktitle =	{Approximation, Randomization, and Combinatorial Optimization. Algorithms and Techniques (APPROX/RANDOM 2014)},
  pages =	{704--720},
  series =	{Leibniz International Proceedings in Informatics (LIPIcs)},
  ISBN =	{978-3-939897-74-3},
  ISSN =	{1868-8969},
  year =	{2014},
  volume =	{28},
  editor =	{Jansen, Klaus and Rolim, Jos\'{e} and Devanur, Nikhil R. and Moore, Cristopher},
  publisher =	{Schloss Dagstuhl -- Leibniz-Zentrum f{\"u}r Informatik},
  address =	{Dagstuhl, Germany},
  URL =		{https://drops.dagstuhl.de/entities/document/10.4230/LIPIcs.APPROX-RANDOM.2014.704},
  URN =		{urn:nbn:de:0030-drops-47336},
  doi =		{10.4230/LIPIcs.APPROX-RANDOM.2014.704}
}

@inproceedings{levin1985oneway,
author = {Levin, L A},
title = {One-way functions and pseudorandom generators},
year = {1985},
isbn = {0897911512},
publisher = {Association for Computing Machinery},
address = {New York, NY, USA},
url = {https://doi.org/10.1145/22145.22185},
doi = {10.1145/22145.22185},
booktitle = {Proceedings of the Seventeenth Annual ACM Symposium on Theory of Computing},
pages = {363–365},
numpages = {3},
location = {Providence, Rhode Island, USA},
series = {STOC '85}
}

@article{goldreich2002testing,
  author    = {Goldreich, Oded and Ron, Dana},
  title     = {Testing connectivity of bounded-degree graphs},
  journal   = {Random Structures \& Algorithms},
  volume    = {21},
  number    = {1},
  pages     = {42--64},
  year      = {2002},
  publisher = {Wiley Online Library},
  doi       = {10.1002/rsa.10045}
}

@inproceedings{berman2014lp,
  author    = {Berman, Piotr and Raskhodnikova, Sofya and Yaroslavtsev, Grigory},
  title     = {$L_p$-testing},
  booktitle = {Proceedings of the Forty-Sixth Annual ACM Symposium on Theory of Computing (STOC)},
  pages     = {164--173},
  year      = {2014},
  doi       = {10.1145/2591796.2591823}
}

@misc{wadhwa2026optimal,
      title={Optimal Quantum State Testing Even with Limited Entanglement}, 
      author={Chirag Wadhwa and Sitan Chen},
      year={2026},
      eprint={2604.07460},
      archivePrefix={arXiv},
      primaryClass={quant-ph},
      url={https://arxiv.org/abs/2604.07460}, 
}

@book{koekoek2010hypergeometric,
  author    = {Koekoek, Roelof and Lesky, Peter A. and Swarttouw, Ren{\'e} F.},
  title     = {Hypergeometric Orthogonal Polynomials and Their q-Analogues},
  series    = {Springer Monographs in Mathematics},
  publisher = {Springer-Verlag},
  address   = {Berlin, Heidelberg},
  year      = {2010},
  doi       = {10.1007/978-3-642-05014-5},
  isbn      = {978-3-642-05013-8}
}

@article{harvey2021integer,
  title={Integer multiplication in time O(nlog$\backslash$,n)},
  author={Harvey, David and Van Der Hoeven, Joris},
  journal={Annals of Mathematics},
  volume={193},
  number={2},
  pages={563--617},
  year={2021},
  publisher={Department of Mathematics, Princeton University Princeton, New Jersey, USA}
}
\bibliographystyle{alpha}
\end{document}